\newtheorem{theorem}{Theorem}
\newcommand{\btheo}{\begin{theorem}}
\newcommand{\etheo}{\end{theorem}}
\newcommand{\bproof}{\begin{proof}}
\newcommand{\eproof}{\end{proof}}
\newtheorem{definition}[theorem]{Definition}
\newcommand{\bdefi}{\begin{definition}}
\newcommand{\edefi}{\end{definition}}
\newtheorem{fact}[theorem]{Fact}
\newcommand{\bprop}{\begin{fact}}
\newcommand{\eprop}{\end{fact}}
\newtheorem{corollary}[theorem]{Corollary}
\newcommand{\bcor}{\begin{corollary}}
\newcommand{\ecor}{\end{corollary}}
\newtheorem{example}[theorem]{Example}
\newcommand{\bex}{\begin{example}}
\newcommand{\eex}{\end{example}}
\newtheorem{lemma}[theorem]{Lemma}
\newcommand{\blemma}{\begin{lemma}}
\newcommand{\elemma}{\end{lemma}}
\newtheorem{remark}[theorem]{Remark}
\newcommand{\bremark}{\begin{remark}}
\newcommand{\eremark}{\end{remark}}
\newtheorem{conj}[theorem]{Conjecture}
\newcommand{\bconj}{\begin{conj}}
\newcommand{\econj}{\end{conj}}
\newcommand{\naturals}{\ensuremath{\mathbb{N}}}
\def\0{{\tt 0}} 
\def\1{{\tt 1}} 
\def\?{{\tt *}} 
\renewcommand{\mid}{\,|\,}
\begin{document}
\title{On the scaling of Polar codes:\\
I. The behavior of polarized channels }
\author{ S. Hamed Hassani, Rudiger Urbanke\thanks{EPFL, School of Computer
 \& Communication Sciences, Lausanne, CH-1015, Switzerland, \{seyedhamed.hassani,ruediger.urbanke\}@epfl.ch. This work was supported by grant no 200021-121903 of the Swiss
National Foundation.
}
}

\maketitle

\begin{abstract}
We consider the asymptotic behavior of the polarization process for
polar codes when the blocklength tends to infinity. In particular,
we study the problem of asymptotic analysis of the  cumulative distribution
$\mathbb{P}(Z_n \leq z)$, where $Z_n=Z(W_n)$ is the Bhattacharyya
process, and its dependence to the rate of transmission R. We show that for a BMS channel $W$, for $R < I(W)$ we have
$\lim_{n \to \infty} \mathbb{P} (Z_n \leq 2^{-2^{\frac{n}{2}+\sqrt{n}
\frac{Q^{-1}(\frac{R}{I(W)})}{2} +o(\sqrt{n})}}) = R$ and for $R<1- I(W)$ we have   $\lim_{n
\to \infty} \mathbb{P} (Z_n \geq 1-2^{-2^{\frac{n}{2}+  \sqrt{n}
\frac{Q^{-1}(\frac{R}{1-I(W)})}{2} +o(\sqrt{n})}}) = R$, where $Q(x)$ is the probability that the standard normal random variable will obtain a value larger than $x$. As a result, if we denote
by $\mathbb{P}_e ^{\text{SC}}(n,R)$ the probability of error using
polar codes of block-length $N=2^n$ and rate $R<I(W)$ under successive cancellation
decoding, then $\log(-\log(\mathbb{P}_e ^{\text{SC}}(n,R)))$ scales
as $\frac{n}{2}+\sqrt{n}\frac{Q^{-1}(\frac{R}{I(W)})}{2}+ o(\sqrt{n})$. We also prove that the same result holds for the block error probability using the
MAP decoder, i.e., for $\log(-\log(\mathbb{P}_e ^{\text{MAP}}(n,R)))$.
\end{abstract}

\section{Introduction}
Polar codes, recently introduced by Ar\i kan \cite{Ari09}, are a family
of codes that provably achieve the capacity of binary memoryless symmetric (BMS) channels using
low-complexity encoding and decoding algorithms. The construction of polar codes involves a method called channel polarization. In this method, $N=2^n$ copies of a BMS channel $W$ are used to construct a set of $2^n$ channels $\{W_{2^n} ^{(i)}\}_{1 \leq i \leq 2^n}$ with the property that as $n$ grows large, a fraction of almost $I(W)$ of the channels have capacity close to $1$ and a fraction of almost $1- I(W)$ of the channels have capacity close to zero. The construction of these channels is done recursively, using a transform called channel splitting. Channel splitting is a transform which takes a BMS channel $W$ as input and outputs two BMS channels $W ^+ $ and $W^-$. We denote this transform by $W \rightarrow (W^+, W^-)$.

For $N=2^n$, the construction of the channels can be visualized in the following way (\cite{Ari09}). Consider an infinite binary tree. To each vertex of the tree we assign a channel in a way that the collection of all the channels that correspond to the vertices at depth $n$ equals $\{W_{2^n} ^{(i)}\}_{1 \leq i \leq 2^n}$. We do this by a recursive procedure. Assign to the root node the channel $W$ itself. To the left offspring of the root node assign  $W^-$ and to the right one assign $W^+$. In general, if $Q$ is the channel that is assigned to vertex $v$, to the left offspring of $v$  assign  $Q ^-$ and to the right one assign $Q ^+$.
\begin{remark}
In this setting, the channel assigned to a vertex at level $n$, is obtained by starting from the original channel $W$ and applying a sequence of $+$ and $-$ on it. More precisely, label the vertices at level $n$ from left to right by $1$ to $2^n$. The channel which is assigned to the $i$-th vertex is $W_{2^n} ^{(i)}$. Let the binary representation of $i-1$ be $b_1 b_2 \cdots b_n$, where $b_1$ is the most significant bit. By the mapping $0 \to -$ and $1 \to +$, every binary sequence $b_1 b_2 \cdots b_n$ is converted to a sequence of $+$ and $-$, denoted by $c_1 c_2\cdots c_n$. Then we have
\begin{equation*}
 W_{2^n} ^{(i)}=(((W^{c_1})^{c_2})^{\cdots})^{c_n}.
\end{equation*}
\end{remark}
E.g., assuming $i=7$ we have $W_{8} ^{(7)} =((W^{+})^{+})^{-}$. For a BMS channel $W$, denote the input alphabet by $\mathcal{X}=\{0,1\}$, the output alphabet by $\mathcal{Y}$, and the transition probabilities by $W(y \mid x)$. The Bhattacharyya parameter of $W$, denoted by $Z(W)$, is given by
\begin{align*}
& Z(W)= \sum_{y \in \mathcal{Y}} \sqrt{W(y\mid 0)W(y \mid 1)}.
\end{align*}
The distribution of the Bhattacharyya parameter of the channels $\{W_ {2^n} ^{(i)}\}_{1 \leq i \leq 2^n}$ plays a fundamental role in the analysis of polar codes. More precisely, for $n \in \naturals$ and $0 < z < 1$, we are interested in analyzing the behavior of
\begin{equation}\label{equ:1}
F(n, z) = \frac{\# \{ i : Z(W_{2^n} ^{(i)}) \leq z \}}{2^n}.
\end{equation}
There is an entirely equivalent probabilistic description of
\eqref{equ:1}. Define the ``polarization''  process (\cite{ArT09})
of the channel $W$ as  $W_0=W$ and
\begin{equation}
W_{n+1}=  \left\{
\begin{array}{lr}
W_n ^{+} &  ; \text{with probability $\frac 12$},\\
W_n ^{-} &  ; \text{with probability $\frac 12$}.
\end{array} \right.
\end{equation}
In words, the process starts from the root node of the infinite binary tree and in each step moves either to the left or the right offspring of the current node with probability $\frac 12$. So at time $n$, the process $W_n$ outputs one of the $2^n$ channels at level $n$ of the tree uniformly at random. The Bhattacharyya process of the channel $W$ is defined  as $Z_n =Z(W_n)$. In this setting, we have:
\begin{equation}
 \mathbb{P}(Z_n \leq z) = F(n,z).
\end{equation}
Our objective is to investigate the behavior of  $\mathbb{P}(Z_n \leq z)$. The analysis of the  process $Z_n$ around the point $z=0$ is of particular interest since this indicates how the ``good'' channels, i.e., the channels that have mutual information close to $1$, behave. According to \cite{ArT09}, the process $Z_n$ is a super-martingale which converges almost surely to a $\{0,1\}$-valued random variable $Z_{\infty}$ with $\mathbb{P}(Z_{\infty}=0)=I(W)$. We further have,
\begin{theorem}[\cite{ArT09}] \label{AT}
 Let $W$ be a BMS channel. For any fixed $\beta < \frac 12$,
\begin{equation*}
 \liminf_{n \to \infty}\mathbb{P}(Z_n \leq 2^{-2^{n \beta}}) = I(W).
\end{equation*}
Conversely, if $I(W)<1$, then for any fixed $\beta > \frac 12$,
\begin{equation*}
 \liminf_{n \to \infty}\mathbb{P}(Z_n \geq 2^{-2^{n \beta}}) = 1.
\end{equation*}
\QED
\end{theorem}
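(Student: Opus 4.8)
The plan is to combine the a.s.\ convergence $Z_n \to Z_\infty \in \{0,1\}$ with $\mathbb{P}(Z_\infty = 0) = I(W)$ (quoted above from \cite{ArT09}) with the elementary evolution identities for the Bhattacharyya parameter under channel splitting, from \cite{Ari09}: $Z(W^+) = Z(W)^2$ and $Z(W) \le Z(W^-) \le 2\,Z(W)$. Let $B_1,B_2,\dots$ be the i.i.d.\ fair bits driving the polarization process ($W_n = W_{n-1}^{+}$ if $B_n=1$ and $W_n = W_{n-1}^{-}$ if $B_n=0$), and let $a_n = B_1+\cdots+B_n$ be the number of ``$+$'' steps, so $a_n \sim \mathrm{Binomial}(n,\tfrac12)$. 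Both halves of the theorem reduce to comparing $-\log_2 Z_n$ with $2^{a_n}$ and then applying a concentration estimate to $a_n$; whether one needs an upper or a lower comparison is decided by the sign of $\beta - \tfrac12$.

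\emph{Converse part.} If $I(W)=0$ then $Z_0=Z(W)=1$, hence $Z_n\equiv1$, and there is nothing to prove; so assume $0<Z_0<1$. From $Z(W^+)=Z(W)^2$ and $Z(W^-)\ge Z(W)$, induction gives $Z_n\ge Z_0^{\,2^{a_n}}$, i.e.\ $-\log_2 Z_n\le 2^{a_n}(-\log_2 Z_0)$, so $\{Z_n<2^{-2^{n\beta}}\}\subseteq\{a_n>n\beta-c\}$ with $c:=\log_2(-\log_2 Z_0)$ a fixed constant. Since $\beta>\tfrac12$, Hoeffding's inequality gives $\mathbb{P}(a_n>n\beta-c)\to0$, whence $\mathbb{P}(Z_n\ge 2^{-2^{n\beta}})\to1$ and the $\liminf$ equals $1$.

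\emph{Direct part.} The inequality $\limsup_n \mathbb{P}(Z_n\le 2^{-2^{n\beta}})\le I(W)$ is soft: for any $\epsilon\in(0,1)$ and all large $n$ one has $2^{-2^{n\beta}}<\epsilon$, and reverse Fatou applied to $\mathbf{1}\{Z_n\le\epsilon\}$ (using $\{Z_n\le\epsilon\text{ i.o.}\}\subseteq\{Z_\infty\le\epsilon\}$, valid since $Z_n\to Z_\infty$) gives $\limsup_n \mathbb{P}(Z_n\le\epsilon)\le\mathbb{P}(Z_\infty\le\epsilon)=\mathbb{P}(Z_\infty=0)=I(W)$. For the matching lower bound I would show that on $A:=\{Z_\infty=0\}$, which has probability $I(W)$, one has $Z_n\le 2^{-2^{n\beta}}$ for all large $n$ a.s.; then $\mathbb{P}(Z_n\le 2^{-2^{n\beta}})\ge\mathbb{P}(\forall k\ge n:\,Z_k\le 2^{-2^{k\beta}})\uparrow\mathbb{P}(A)$, so $\liminf_n\mathbb{P}(Z_n\le 2^{-2^{n\beta}})\ge I(W)$, and with the soft bound the $\liminf$ (indeed the limit) equals $I(W)$. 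To obtain the a.s.\ statement, fix $\gamma\in(\beta,\tfrac12)$ and track $L_n:=\log_2(-\log_2 Z_n)$, which is defined once $Z_n<\tfrac12$: a ``$+$'' step gives $L_n=L_{n-1}+1$ (since $-\log_2 Z$ doubles exactly), and a ``$-$'' step gives $L_n\ge L_{n-1}+\log_2(1-2^{-L_{n-1}})$ (since $-\log_2 Z$ drops by at most $1$). On $A$, $-\log_2 Z_n\to\infty$, so for any constant $c_1$ there is an a.s.\ finite random time $M_1$ with $-\log_2 Z_k\ge c_1$, i.e.\ $L_k\ge\log_2 c_1$, for all $k\ge M_1$; beyond $M_1$ each ``$-$'' step then costs at most $-\log_2(1-1/c_1)\le 3/c_1$ in $L$, so for $n\ge M_1$
\begin{equation*}
L_n\ \ge\ L_{M_1}+\sum_{k=M_1+1}^{n}\Bigl(B_k-\tfrac{3}{c_1}\,\mathbf{1}\{B_k=0\}\Bigr)\ =\ L_{M_1}+\bigl(1+\tfrac{3}{c_1}\bigr)\,a'_n-\tfrac{3}{c_1}(n-M_1),
\end{equation*}
where $a'_n$ is the number of ``$+$'' steps in $(M_1,n]$. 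By the strong law of large numbers $a'_n/(n-M_1)\to\tfrac12$, so the right-hand side is $\bigl(\tfrac12-\tfrac{3}{2c_1}+o(1)\bigr)(n-M_1)$; choosing $c_1$ large enough that $\tfrac12-\tfrac{3}{2c_1}>\gamma$ forces $L_n\ge\gamma n$, i.e.\ $Z_n\le 2^{-2^{\gamma n}}\le 2^{-2^{n\beta}}$, for all large $n$ on $A$.

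\emph{Main obstacle.} The delicate point is controlling the damage done by a run of ``$-$'' steps while $-\log_2 Z$ is still small: the only ordering-free lower bound, $-\log_2 Z_n\ge\max\{0,\,-\log_2 Z_0-(n-a_n)\}\cdot 2^{a_n}$, becomes vacuous as soon as the number of ``$-$'' steps exceeds $-\log_2 Z_0$. What rescues the argument is exactly that \emph{on} $A$ one has $-\log_2 Z_n\to\infty$, so after the (random) time $M_1$ the process is permanently in the regime where each ``$-$'' step is negligible on the doubly-logarithmic scale $L_n$; the remainder is then a routine concentration / law-of-large-numbers count of the proportion of ``$+$'' steps, which goes through precisely because $\gamma$, and hence $\beta$, is strictly below $\tfrac12$.
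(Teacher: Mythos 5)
The paper does not prove Theorem~\ref{AT}; it quotes it from Ar\i kan and Telatar \cite{ArT09} and uses it as the starting point for the refinement in Theorem~\ref{main_result}. Your proof is correct and is in substance the \cite{ArT09} bootstrapping argument. The converse via the pathwise bound $Z_n \ge Z_0^{2^{a_n}}$ (from $Z(W^+)=Z(W)^2$ and $Z(W^-)\ge Z(W)$) plus a Hoeffding tail bound on the binomial $a_n$ is airtight. The direct part is also right; two points are worth making explicit. First, $M_1$ is not a stopping time since it looks into the future, but that is harmless: the SLLN holds for the full i.i.d.\ sequence, $M_1<\infty$ a.s.\ on $A=\{Z_\infty=0\}$, so the shifted averages $a'_n/(n-M_1)$ still converge to $\tfrac12$ a.s.\ on $A$. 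Second, one needs $c_1\ge 2$ both for $\log_2(1-2^{-L_{k-1}})$ to be defined along the whole tail and for the bound $-\log_2(1-1/c_1)\le 3/c_1$, which your ``$c_1$ large enough'' accommodates. As a minor bonus your argument in fact yields the limit, not merely the $\liminf$. It is instructive to compare with the machinery the paper builds for its own, sharper result: your qualitative observation that once $-\log_2 Z_n$ is large a ``$-$'' step costs only $O(1/(-\log_2 Z_n))$ on the scale $L_n=\log_2(-\log_2 Z_n)$ is exactly what Lemma~\ref{main_lemma} makes quantitative via the run decomposition of $(b_1,\dots,b_n)$ and the moment bound $\mathbb{E}[2^X]\le 2.87$; and your soft invocation of the SLLN is what the paper replaces with the CLT/Stirling analysis of the threshold $E(n,x)$ to extract the $\sqrt{n}\,Q^{-1}(\cdot)/2$ second-order term. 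The qualitative estimates you use suffice for the first-order $n/2+o(n)$ statement of Theorem~\ref{AT}, while the uniform quantitative bounds of Lemma~\ref{main_lemma} and Lemma~\ref{gen_bound} are precisely what is needed to upgrade to Theorem~\ref{main_result}.
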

As a result, the probability of error when using polar codes of
length $N=2^n$ under successive cancellation decoding behaves roughly
as $o(2^{-\sqrt{N}})$ as $N$ tends to infinity. Denote the error
probability by $\mathbb{P}_e ^{\text{SC}}(n,R)$. In this paper, we
provide a refined estimate of $\mathbb{P}(Z_n \leq z)$. We derive
the asymptotic relation between  $\mathbb{P}(Z_n \leq z)$ and the
rate of transmission $R$ when polar codes with a successive
cancellation decoder are used. From this we derive bounds on the
asymptotic behavior of $\mathbb{P}_e ^{\text{SC}}(n,R)$ . We further
show that the same bounds hold when we perform MAP decoding.
The outline of the paper is as follows. In Section~\ref{s_r} we state the main results of the paper. In Section~\ref{a_p} we first define several auxiliary processes and provide bounds on their asymptotic behavior. Using these bounds, we then prove the main results.
\section{Main Results}\label{s_r}
Proof of the following theorems is given in Section~\ref{a_p}.
\begin{theorem}\label{main_result}
For a BMS channel $W$, let $Z_n= Z(W_n)$ be the Bhattacharyya process of $W$.
\begin{enumerate}
\item For $R < I(W)$,
\begin{equation} \nonumber
 \lim _{n \to \infty} \mathbb{P}(Z_{n} \leq 2^{ -2^{E(n,\frac{R}{I(W)})(1+  \Theta(\frac{f (n)}{ n})) } }) =R.
\end{equation}
\item For $R < 1- I(W)$,
\begin{equation} \nonumber
 \lim _{n \to \infty} \mathbb{P}(Z_{n} \geq 1- 2^{ -2^{E(n,\frac{R}{1-I(W)}(1+  \Theta(\frac{f (n)}{ n})) } }) =R.
\end{equation}
Here, $f(n)$ is any function so that $f(n) = o(\sqrt{n})$ and $\lim_{n \to \infty} f(n) = \infty$.  The function $E(n,x)$, $0 < R< 1$,  is the unique integer solution of the equation
 \begin{equation}\label{E(n,R)}
\sum_{i=E(n,x)} ^{n} { n \choose i } \leq  2^{n} x \leq \sum_{i=E(n,x)-1} ^{n} { n \choose i }.
\end{equation}
\end{enumerate}
\QED
\end{theorem}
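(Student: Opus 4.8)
The plan is to reduce both parts to a de~Moivre--Laplace estimate for the number of ``$+$''\nobreakdash-steps taken by the process. Recall the recursions under channel splitting: $Z(W^{+})=Z(W)^{2}$ holds with equality, and $Z(W)\le Z(W^{-})\le 2Z(W)-Z(W)^{2}$. Writing $A_n:=\log_2(1/Z_n)$, a ``$+$''\nobreakdash-step therefore \emph{doubles} $A_n$ exactly while a ``$-$''\nobreakdash-step changes $A_n$ by an amount in $[-1,0]$; equivalently, for the doubly--logarithmic exponent $D_n:=\log_2 A_n=\log_2\log_2(1/Z_n)$ a ``$+$''\nobreakdash-step adds exactly $1$ and a ``$-$''\nobreakdash-step subtracts at most $O(1/A_n)$, negligible once $A_n$ is large. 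Since the process moves ``$+$'' or ``$-$'' independently with probability $\tfrac12$, the number $B_{s,t}$ of ``$+$''\nobreakdash-steps in rounds $(s,t]$ is $\mathrm{Binomial}(t-s,\tfrac12)$ and is independent of $\mathcal F_s=\sigma(W_0,\dots,W_s)$. The heart of the matter is that, once the process has ``escaped'' into the small\nobreakdash-$Z$ regime, $Z_n$ is sandwiched as $2^{-2^{B+O(1)}}$ with $B$ a binomial count, and that escape happens with probability $I(W)$.

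Fix a burn--in length $m=f(n)$. For the upper bound in part~(1), first note that $Z_n$ is a supermartingale with $Z_\infty\in\{0,1\}$, so $\mathbb{P}(Z_\infty=1\mid W_n)=\mathbb{E}[Z_\infty\mid W_n]\le Z_n$ and hence $\mathbb{P}(Z_n\le 2^{-2^{E}},\,Z_\infty=1)\le 2^{-2^{E}}\to 0$; moreover, by Theorem~\ref{AT} one can pick $\zeta_n\to0$ so slowly (e.g.\ $\zeta_n=2^{-2^{\beta f(n)}}$ with $\beta<\tfrac12$) that $\mathbb{P}(Z_m\le\zeta_n)\to I(W)$, whence also $\mathbb{P}(Z_\infty=0,\,Z_m>\zeta_n)=o(1)$. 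Thus $\mathbb{P}(Z_n\le 2^{-2^{E}})\le\mathbb{P}(A_n\ge 2^{E},\,Z_m\le\zeta_n)+o(1)$. On $\{Z_m\le\zeta_n\}\in\mathcal F_m$ the exact recursion gives $A_n\le A_m 2^{B_{m,n}}$ with $B_{m,n}\perp\mathcal F_m$, and $\log_2 A_m\le\log_2 A_0+m=O(f(n))$ \emph{deterministically}, so $\mathbb{P}(A_n\ge 2^{E}\mid W_m)\le\mathbb{P}\!\big(B_{m,n}\ge E-O(f(n))\big)$, a bound not depending on $W_m$. Hence $\mathbb{P}(Z_n\le 2^{-2^{E}})\le\mathbb{P}(Z_m\le\zeta_n)\cdot\mathbb{P}\!\big(\mathrm{Binomial}(n-m,\tfrac12)\ge E-O(f(n))\big)+o(1)$.

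For the matching lower bound, with the same $\zeta_n$ (so that $Z_m\le\zeta_n$ forces $A_m\ge 2^{\beta f(n)}$, which exceeds $\log n$ as soon as $f$ grows faster than $\log\log n$), one argues that off an $o(1)$--probability event the ``double--or--subtract--$1$'' walk started at $A_m$ never drops back below $1$, so $A_n\ge A_m 2^{B_{m,n}-O(1)}$ and $D_n\ge D_m+B_{m,n}-o(1)$ with $D_m=\Theta(f(n))$; this yields $\mathbb{P}(Z_n\le 2^{-2^{E}})\ge\mathbb{P}(Z_m\le\zeta_n)\cdot\mathbb{P}\!\big(\mathrm{Binomial}(n-m,\tfrac12)\ge E+O(f(n))\big)-o(1)$. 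Since $\binom{n}{i}/2^{n}=O(1/\sqrt n)$ near the centre, shifting the summation index by $O(m)=o(\sqrt n)$ (and replacing $n-m$ by $n$) perturbs a central binomial tail by only $o(1)$, and by the defining relation~\eqref{E(n,R)} one has $\mathbb{P}(\mathrm{Binomial}(n,\tfrac12)\ge E(n,x))=x+O(1/\sqrt n)$, with $E(n,x)=\tfrac n2+\tfrac{\sqrt n}{2}Q^{-1}(x)+o(\sqrt n)$ by de~Moivre--Laplace. Taking $E=E(n,R/I(W))\big(1+\Theta(f(n)/n)\big)$ — the correction being precisely the $o(\sqrt n)$ slack just described — both bounds collapse to $I(W)\cdot\big(R/I(W)+o(1)\big)$, i.e.\ $\mathbb{P}(Z_n\le 2^{-2^{E}})\to R$. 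Part~(2) is the mirror image: work with $1-Z_n$, using $1-Z(W^{+})=(1-Z(W))(1+Z(W))$ and $1-Z(W^{-})\ge(1-Z(W))^{2}$, interchange the roles of ``$+$'' and ``$-$'', replace $\{Z_m\le\zeta_n\}$ by $\{Z_m\ge 1-\zeta_n\}$ (probability $\to 1-I(W)$, by the $+\leftrightarrow-$ analogue of Theorem~\ref{AT} governing $\{Z_\infty=1\}$), so the number of ``$-$''\nobreakdash-steps now plays the role of $B$, producing the exponent $E(n,R/(1-I(W)))$ up to the same $1+\Theta(f(n)/n)$ factor.

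The genuinely delicate points — and presumably what the ``auxiliary processes'' of Section~\ref{a_p} are for — are exactly the two places where the random walk is traded for a clean binomial count: (i) showing that once $Z_m$ falls below the slowly vanishing threshold $\zeta_n$, the process \emph{stays} polarized for the remaining $n-m$ rounds except on an $o(1)$--probability event, so that the near--exact recursion for $D_n$ applies on all of $(m,n]$ and the accumulated ``$-$''--losses are indeed $o(1)$ (this is where one must check that $\beta f(n)$ is large enough to make the ``un--escape'' probability negligible); and (ii) making the approximation $\mathbb{P}(Z_m\le\zeta_n)\to I(W)$ quantitative via Theorem~\ref{AT} with an error that stays well below the $\Theta(\sqrt n)$ scale on which the binomial quantile $E(n,\cdot)$ lives. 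Packaging these as two explicit processes that sandwich $D_n$, each a binomial count and one of them additionally absorbed (at $-\infty$) with probability $1-I(W)$ to account for the non--polarizing channels, is what lets the argument dispense with any conditioning on tail events.
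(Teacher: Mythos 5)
Your overall architecture is the same as the paper's: a burn-in of length $m=f(n)$, a reduction of the exponent of $Z_n$ to a Binomial$(n-m,\tfrac12)$ count of plus-steps, the defining property of $E(n,x)$ to convert the binomial tail into $x\pm o(1)$, and a mirrored argument for part (2). Your upper bound even contains a nice simplification: disposing of the event $\{Z_n\le 2^{-2^{E}},\,Z_\infty=1\}$ via the supermartingale bound $\mathbb{P}(Z_\infty=1\mid\mathcal F_n)\le Z_n$, where the paper instead invokes Lemma~\ref{gen_bound} and Lemma~\ref{lower_gen}. However, the lower bound has a genuine gap at exactly the step you yourself flag as delicate. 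You assert that, off an $o(1)$-probability event, the double-or-subtract-one walk started at $A_m$ satisfies $A_n\ge A_m 2^{B_{m,n}-O(1)}$, and you justify this by saying the walk ``never drops back below $1$.'' That implication is false: staying above $1$ gives no multiplicative lower bound --- an early run of minus steps can take the walk from $A_m$ down to a constant, after which it only tracks $O(2^{B_{m,n}})$ rather than $A_m2^{B_{m,n}-O(1)}$; the true loss is governed by the weighted sum of down-runs, $\sum_{i\ \mathrm{even}} r_i\,2^{-\sum_{j\ \mathrm{odd},\,j<i}r_j}$, not by whether the walk hits a floor. The statement actually needed is that this normalized deficit exceeds $a_0-\beta$ only with probability $O(2^{-(a_0-\beta)/2})$, which is precisely Lemma~\ref{main_lemma} of the paper (proved via the run-length decomposition, self-similarity of the exponential moment, Jensen, and a fixed-point bound). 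This lemma is the technical core of the result; your proposal identifies the right target but supplies no proof of it, so the lower bound --- and hence the theorem --- is not established.

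Two secondary points. First, your parenthetical requirement that $f$ grow faster than $\log\log n$ (so that $A_m\ge\log n$) is not among the theorem's hypotheses, which allow any $f=o(\sqrt n)$ with $f(n)\to\infty$; the paper's quantitative estimates (Lemma~\ref{main_lemma} and Lemma~\ref{gen_bound}) need only $m\to\infty$, so if your intended argument truly needs $A_m\ge\log n$ it proves a weaker statement. Second, in part (2) the minus-step inequality you quote, $1-Z(W^-)\ge(1-Z(W))^2$, serves only the ``$\le R$'' half of the mirrored argument; for the ``$\ge R$'' half you need the opposite-direction control $1-Z(W^-)\le(1-Z(W)^2)^2$, which is why the paper works with $Z_n'=1-Z_n^2$. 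Likewise, the ``$+\leftrightarrow-$ analogue of Theorem~\ref{AT}'' you invoke to get $\mathbb{P}(Z_m\ge 1-\zeta_n)\to 1-I(W)$ is not contained in the cited theorem; it is exactly part (b) of Lemma~\ref{gen_bound}, which must be proved (e.g.\ by the $Q_n=Z_n(1-Z_n)$ argument of the appendix) rather than assumed.
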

{\em Discussion:} Theorem~\ref{main_result} characterizes the asymptotic behavior of $\mathbb{P}(Z_n \leq z)$. By the Stirling formula applied to \eqref{E(n,R)}, the function $E(n,\frac{R}{I(W)})$ behaves like $\frac{n}{2}+\sqrt{n} \frac{Q^{-1}(\frac{R}{I(W)})}{2} +o(\sqrt{n})$, where $Q(x)$ is the probability that the standard normal random variable will obtain a value larger than $x$. Thus by Theorem~\ref{main_result} part (1) we have
$$\lim_{n \to \infty} \mathbb{P}(Z_n \leq 2^{-2^{\frac{n}{2}+\sqrt{n} \frac{Q^{-1}(\frac{R}{I(W)})}{2} +o(\sqrt{n})}}) =R.$$
This refines the result of Theorem~\ref{AT} in the following way. According to Theorem~\ref{AT}, if we transmit at rate $R$ below the channel capacity, then  $ \log(-\log(\mathbb{P}_{e} ^{\text{SC}}(n,R)))$ scales like $\frac{n}{2} + o(n)$. Theorem~\ref{main_result} gives one further term by stating that $o(n)$ is in fact $\sqrt{n}\frac{Q^{-1}(\frac{R}{I(W)})}{2}+ o(\sqrt{n})$.  The proof of Theorem~\ref{main_result} is based on observing that, once the process $Z_n$ is close to either of the endpoints of the interval $[0,1]$, it moves closer to that endpoint with high probability. As a result, the quality of a channel $W_{2^n} ^{(i)}$ is greatly dependent on the first few less significant bits of the binary expansion of $i-1$.  This observation together with the result of Theorem~\ref{main_result} imply the following.
\begin{corollary}
Let $W$ be a BMS channel and let $R < I(W)$ be the rate of transmission. The fraction of common indices chosen by polar codes and Reed-Muller codes (normalized by $2^n R$), approaches $I(W)$ as $n \to \infty$. \QED
\end{corollary}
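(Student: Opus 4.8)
The plan is to reduce the statement to an elementary count of binary strings by Hamming weight, feeding in the structural picture of the Bhattacharyya process behind Theorem~\ref{main_result}. Fix once and for all a function $f(n)$ with $f(n)=o(\sqrt{n})$ and $f(n)\to\infty$, as in Theorem~\ref{main_result}, and for $j\in\{0,\dots,2^n-1\}$ let $w(j)$ be the number of $1$'s in the binary expansion of $j$. Two index sets are to be compared. The rate-$R$ polar code $\mathcal A_{\mathrm{pol}}$ is the set of the $\lfloor 2^nR\rfloor$ indices $i$ of smallest $Z(W_{2^n}^{(i)})$, and by Theorem~\ref{main_result}(1) it coincides, up to $o(2^n)$ indices, with $\bigl\{\,i:\ Z(W_{2^n}^{(i)})\le 2^{-2^{E(n,R/I(W))(1+\Theta(f(n)/n))}}\,\bigr\}$, whose exponent is $E(n,R/I(W))+o(\sqrt{n})$. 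In the usual identification of Reed--Muller codes within the polar-coding framework (the $i$-th row of the polarization transform has Hamming weight $2^{w(i-1)}$), the rate-$R$ Reed--Muller code $\mathcal A_{\mathrm{RM}}$ keeps the $\lfloor 2^nR\rfloor$ indices of largest $w(i-1)$, which by the defining relation \eqref{E(n,R)} of $E(n,\cdot)$ with $x=R$ equals $\{\,i:\ w(i-1)\ge E(n,R)\,\}$ up to at most $\binom{n}{E(n,R)-1}=O(2^n/\sqrt{n})=o(2^n)$ boundary indices.

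The next step is to rewrite $\mathcal A_{\mathrm{pol}}$ itself in the language of Hamming weights. Fix $\epsilon\in(0,\tfrac{1}{2})$ and split the binary expansion of $i-1$ into its $f(n)$ most significant bits — which fix the first $f(n)$ polarization steps and hence a level-$f(n)$ channel $Q_i$ — and the remaining $n-f(n)$ bits, whose Hamming weight $w(S_i)$ counts the $+$ steps that follow. The claim, to be extracted from the proof of Theorem~\ref{main_result}, is that up to $O(\epsilon\,2^n)+o(2^n)$ indices
\[
\mathcal A_{\mathrm{pol}}=\bigl\{\,i:\ Z(Q_i)<\epsilon\ \text{ and }\ w(S_i)\ge E\bigl(n-f(n),\,R/I(W)\bigr)\,\bigr\},
\]
the threshold on $w(S_i)$ being pinned by $|\mathcal A_{\mathrm{pol}}|=\lfloor 2^nR\rfloor$. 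Two facts from that proof are used: since $Z_n\to Z_\infty\in\{0,1\}$ almost surely with $\mathbb{P}(Z_\infty=0)=I(W)$, the fraction of ``good'' prefixes — those with $Z(Q_i)<\epsilon$ — tends to $I(W)$ as $f(n)\to\infty$, while the fraction with $Z(Q_i)>1-\epsilon$ tends to $1-I(W)$; and, once $Z$ has entered $[0,\epsilon)$, its iterated logarithm grows by essentially $1$ at each subsequent $+$ step, so that on a good prefix and all but a small fraction of suffixes $\log_2\log_2\tfrac{1}{Z(W_{2^n}^{(i)})}=w(S_i)+O(f(n))$, whereas from $(1-\epsilon,1]$ the process stays near $1$ for all but an $O(\epsilon)$ fraction of suffixes. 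Because the weight of a prefix is at most $f(n)=o(\sqrt{n})$ while $E(n,R)$ and $E(n,R/I(W))$ are both $\tfrac{n}{2}+\Theta(\sqrt{n})$, the prefix weight is negligible on the scale of the fluctuations, and $E(n-f(n),R/I(W))=\tfrac{n}{2}+\sqrt{n}\tfrac{Q^{-1}(R/I(W))}{2}+o(\sqrt{n})$.

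The last step is the count. Since $0<R<I(W)<1$ we have $R<R/I(W)<1$, and $Q^{-1}$ being decreasing, $E(n-f(n),R/I(W))=\tfrac{n}{2}+\sqrt{n}\tfrac{Q^{-1}(R/I(W))}{2}+o(\sqrt{n})$ falls below $E(n,R)-f(n)=\tfrac{n}{2}+\sqrt{n}\tfrac{Q^{-1}(R)}{2}+o(\sqrt{n})$ by $\Omega(\sqrt{n})$; hence on $\{\,i:\ Z(Q_i)<\epsilon\,\}$ the Reed--Muller condition $w(i-1)\ge E(n,R)$ already forces $w(S_i)\ge E(n-f(n),R/I(W))$, so
\[
\mathcal A_{\mathrm{pol}}\cap\mathcal A_{\mathrm{RM}}=\{\,i:\ Z(Q_i)<\epsilon\,\}\cap\mathcal A_{\mathrm{RM}}
\]
up to $O(\epsilon\,2^n)+o(2^n)$ indices. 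As $Z(Q_i)$ depends only on the prefix of $i-1$, this cardinality factors as (number of good prefixes) times (for a good prefix $P$, the number of suffixes $S\in\{0,1\}^{n-f(n)}$ with $w(S)\ge E(n,R)-w(P)$): the first factor is $(I(W)+o(1))\,2^{f(n)}$, and since $w(P)\le f(n)=o(\sqrt{n})$ the second is $2^{n-f(n)}\,\mathbb{P}\bigl(\mathrm{Bin}(n-f(n),\tfrac{1}{2})\ge\tfrac{n}{2}+\sqrt{n}\tfrac{Q^{-1}(R)}{2}+o(\sqrt{n})\bigr)=(R+o(1))\,2^{n-f(n)}$ by the central limit theorem. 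Multiplying gives $|\mathcal A_{\mathrm{pol}}\cap\mathcal A_{\mathrm{RM}}|=(I(W)+o(1))R\,2^n+O(\epsilon\,2^n)$; dividing by $2^nR$ and letting first $n\to\infty$ and then $\epsilon\to0$ yields the limit $I(W)$.

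The part I expect to be the real obstacle is the structural identity for $\mathcal A_{\mathrm{pol}}$ in the middle step — i.e., proving that its two exceptional families are of size $O(\epsilon\,2^n)+o(2^n)$: the indices whose prefix channel $Q_i$ is bad ($Z(Q_i)>1-\epsilon$) but whose full channel is nevertheless small enough to be selected by the polar code, and the good-prefix indices whose suffix is so atypical that $\log_2\log_2\tfrac{1}{Z_n}$ is far from $w(S_i)$. Both require the quantitative one-step estimates on $Z(W^+)$ and $Z(W^-)$ together with the supermartingale/random-walk bounds on the probability that $Z_n$ escapes a neighbourhood of an endpoint of $[0,1]$ — exactly the machinery developed for Theorem~\ref{main_result} and for the converse half of Theorem~\ref{AT}. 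Granting that, the corollary is the elementary count of the last step.
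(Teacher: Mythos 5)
Your proposal is correct and is essentially the argument the paper itself intends: the paper gives this corollary no separate proof beyond the remark that the fate of $W_{2^n}^{(i)}$ is decided by the first few polarization steps together with Theorem~\ref{main_result}, and your prefix/suffix decomposition (prefix of length $f(n)$ polarizing to good/bad with fractions $I(W)$ and $1-I(W)$, suffix weight compared against $E(n,R)$ versus $E(n,R/I(W))$, followed by the binomial/CLT count) is exactly that remark made quantitative, using the same machinery (the dominating processes, Lemma~\ref{main_lemma}, Lemma~\ref{gen_bound}) the paper develops for Theorem~\ref{main_result}. The structural identification of the polar index set that you flag as the main obstacle does go through with those lemmas (deterministic lower bound $Z_n\geq Z(Q_i)^{2^{w(S_i)}}$, Lemma~\ref{main_lemma} for the matching upper bound, and the supermartingale bound for bad prefixes), so the gap you acknowledge is one of detail, not of approach.
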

Theorem~\ref{main_result} characterizes the scaling of the error probability of polar codes under the successive cancellation decoder. The same result holds for the case of the MAP decoder.
\begin{theorem} \label{main_result2}
Let $W$ be a BMS channel and let $R < I(W)$ be the rate of transmission. Let $C(n,R)$ be a linear code whose generator matrix is obtained by choosing a subset of $2^n R$ rows of  $\bigl[ \begin{smallmatrix} 1 & 0 \\ 1 & 1  \end{smallmatrix} \bigr]^{\otimes n}$ (e.g., polar codes or Reed-Muller codes). Denote by $\mathcal{I}_C$ the set of the indices of the chosen rows and also denote by $\mathbb{P}_C ^{\text{MAP}}(n,R)$, the block error probability when we use the code $C$ for transmission and decode according to the MAP rule. We have
\begin{equation*}
\mathbb{P}_C ^{\text{MAP}}(n,R) \geq  2^{-2^{\text{min}_{i \in \mathcal{I}_C} \text{wt($i$)} +1+\log(-\log(Z(W)))}-1},
\end{equation*}
where $\text{wt($i$)}$ denotes the number of $1$'s in the binary expansion of $i$. As a result, for every such code, we have $ \log (-\log (\mathbb{P}_C ^{\text{MAP}}(n,R))) \leq \frac{n}{2}+  \sqrt{n}\frac{ Q^{-1}(R)}{2}+ o(\sqrt{n})$. Also for the case of polar codes we have $ \log (-\log (\mathbb{P}_C ^{\text{MAP}}(n,R))) \leq \frac{n}{2} \frac{Q^{-1}(\frac{R}{I(W)})}{2} \sqrt{n} + o(\sqrt{n})$.
\QED
\end{theorem}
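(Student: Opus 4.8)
\emph{Proof proposal.} The plan is to lower-bound the MAP block error by the pairwise error probability between the all-zero codeword and a minimum-weight codeword, and then to locate that minimum weight using the structure of $\bigl[ \begin{smallmatrix} 1 & 0 \\ 1 & 1 \end{smallmatrix} \bigr]^{\otimes n}$ together with Theorem~\ref{main_result}.

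First I would use the classical fact that the row of $\bigl[ \begin{smallmatrix} 1 & 0 \\ 1 & 1 \end{smallmatrix} \bigr]^{\otimes n}$ whose index has $k$ ones in its binary expansion has Hamming weight exactly $2^{k}$; hence $C(n,R)$ contains a nonzero codeword of weight $d = 2^{w}$ with $w = \min_{i \in \mathcal{I}_C}\text{wt}(i)$. For the converse itself: since $C$ is linear and $W$ is symmetric, $\mathbb{P}_C^{\text{MAP}}(n,R)$ equals the error probability conditioned on the all-zero codeword being transmitted, which, averaging over this codeword and a fixed weight-$d$ codeword $c^{*}$, is at least $\tfrac12\sum_y \min\{W^{\otimes N}(y|0),\,W^{\otimes N}(y|c^{*})\}$. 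Since the two codewords agree in $N-d$ coordinates, the Bhattacharyya parameter of the corresponding pair of output distributions is $Z(W)^{d}$, and the standard inequality $\sum_y\min\{p,q\}\ge\tfrac12\bigl(\sum_y\sqrt{pq}\bigr)^{2}$ yields $\mathbb{P}_C^{\text{MAP}}(n,R)\ge c\, Z(W)^{2d}$ for a fixed $c>0$ --- which is the displayed inequality up to the precise value of $c$ --- and therefore $\log(-\log\mathbb{P}_C^{\text{MAP}}(n,R))\le w+1+\log(-\log Z(W))+o(1) = w+O(1)$.

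It then remains to bound $w=\min_{i\in\mathcal I_C}\text{wt}(i)$ from above. For an \emph{arbitrary} choice of $2^{n}R$ rows, a pigeonhole argument on the binomial coefficients forces $w\le\min\{k:\sum_{j\ge k}\binom{n}{j}\ge 2^{n}R\}$, which by Stirling's formula (this is essentially $E(n,R)$) equals $\tfrac n2+\sqrt n\,\tfrac{Q^{-1}(R)}{2}+o(\sqrt n)$; this already proves the first assertion for every code of the stated form. For polar codes I would instead show that $\mathcal I_C$ is \emph{forced} to contain an index of much smaller weight. Using only $Z(W^{+})=Z(W)^{2}$ and $Z(W^{-})\le 2Z(W)$, the index whose $+/-$ string is $+^{b}\,-^{\,n-b}$ gives a channel with Bhattacharyya parameter at most $2^{\,n-b}Z(W)^{2^{b}}$. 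By Theorem~\ref{main_result} (part (1)), for any fixed $R'<R$ the fraction of channels with $Z(W_{2^{n}}^{(i)})\le 2^{-2^{\,E(n,R'/I(W))+o(\sqrt n)}}$ tends to $R'<R$, so for $n$ large there are strictly fewer than $2^{n}R$ such channels and every one of them is selected by the polar code; taking $b = E(n,R'/I(W))+o(\sqrt n)=\tfrac n2+\sqrt n\,\tfrac{Q^{-1}(R'/I(W))}{2}+o(\sqrt n)$ makes $2^{\,n-b}Z(W)^{2^{b}}$ fall below that threshold (the prefactor $2^{\,n-b}$ being negligible against the double exponential), so this index lies in $\mathcal I_C$. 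Hence $w\le\tfrac n2+\sqrt n\,\tfrac{Q^{-1}(R'/I(W))}{2}+o(\sqrt n)$ for every $R'<R$, and letting $R'\uparrow R$ (continuity of $Q^{-1}$) gives $w\le\tfrac n2+\sqrt n\,\tfrac{Q^{-1}(R/I(W))}{2}+o(\sqrt n)$; combining with the converse of the previous paragraph finishes the polar case.

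The routine parts --- Stirling's approximation for $E(n,x)$, the exact constant and tie-breaking in the pairwise converse --- are not the difficulty. The crux is the second paragraph: that linearity together with channel symmetry genuinely collapses the block error down to a two-codeword hypothesis test, so that the bound is driven by the Bhattacharyya parameter of $W$ itself rather than of any polarized channel; and the observation that the smallest Hamming weight of an index chosen by a polar code is pinned to the same Gaussian-tail location $\tfrac n2+\sqrt n\,\tfrac{Q^{-1}(R/I(W))}{2}$ that already governs the direct analysis of $Z_n$, which is why passing through a slightly smaller rate $R'$ and invoking Theorem~\ref{main_result} is the natural route.
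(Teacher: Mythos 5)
Your proposal is correct, but it reaches both halves of the theorem by routes genuinely different from the paper's. For the basic inequality the paper never looks at codewords: it lower-bounds the block error by the bit-MAP error at a single position $i$, applies Fano, and then hands a genie all the other inputs, which turns the bit channel into the all-plus polarized channel $\bar{W}_i$ with $Z(\bar{W}_i)=Z(W)^{2^{\mathrm{wt}(i)}}$; the bound $H(\bar{W}_i)\ge \tfrac12 Z(\bar{W}_i)^{2}$ then gives exactly the displayed constant. You instead use the row-weight fact (row $i$ has weight $2^{\mathrm{wt}(i)}$), collapse to a two-codeword test via linearity and channel symmetry, and invoke $\sum_y\min\{p,q\}\ge\tfrac12\bigl(\sum_y\sqrt{pq}\bigr)^{2}$; this is equally valid and yields $\tfrac14 Z(W)^{2^{w+1}}$, i.e., an extra $-1$ in the outer exponent relative to the printed display --- immaterial for the scaling claims, though strictly a hair weaker than the stated constant. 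The pigeonhole step for an arbitrary choice of rows is the same as the paper's. For the polar-specific bound the paper runs a probabilistic intersection argument: conditioning on $Z_m\le 2\rho^m$, it combines \eqref{bound5}, Lemma~\ref{main_lemma} and Theorem~\ref{main_result} to show that the set of indices achieving rate $R$ must intersect the set of indices whose $Z$ is sandwiched between two explicit double exponentials, and applying $\log(-\log(\cdot))$ bounds the weight of such an index, with $\epsilon\to 0$ at the end. You instead exhibit the explicit index whose pattern is $b$ pluses followed by $n-b$ minuses, bound its Bhattacharyya parameter deterministically by $2^{n-b}Z(W)^{2^{b}}$ using $Z(W^-)\le 2Z(W)$, and use Theorem~\ref{main_result} only as a counting device (fewer than $2^nR$ channels beat the threshold at rate $R'<R$, so your index must be selected), then let $R'\uparrow R$. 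Your version is more constructive and avoids the auxiliary conditioning; its only extra lean is on the definition of the polar code as selecting the $2^nR$ indices of smallest $Z$ (the paper's identification of its event $A_n$ with the selected set is an equivalent, equally informal step), and your $R'\uparrow R$ passage needs the same routine diagonalization as the paper's $\epsilon\to 0$.
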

{\em Discussion:} By this theorem, for polar codes we have  $ \log (-\log (\mathbb{P}_C ^{\text{MAP}}(n,R))) \leq \frac{n}{2} + \sqrt{n} \frac{Q^{-1} (\frac{R}{I(W)})}{2} + o(\sqrt{n})$. Now since $\mathbb{P}_C ^{\text{MAP}}(n,R) \leq \mathbb{P}_C ^{\text{SC}}(n,R)$, for the case of polar codes $ \log (-\log (\mathbb{P}_C ^{\text{MAP}}(n,R)))$  scales as $\frac{n}{2} + \sqrt{n} \frac{Q^{-1} (\frac{R}{I(W)})}{2} + o(\sqrt{n})$.
\section{Proof of the Main Result} \label{proofs} \label{a_p}
\subsection{Analyzing closely related processes}
In this part we consider several auxiliary processes and provide bounds on their asymptotic behavior.
Let $\{B_n\}_{n \in \naturals}$ be a sequence of iid Bernoulli($\frac12$) random variables. Denote by $(\mathcal{F}, \Omega, \mathbb{P})$ the probability space generated by this sequence and let  $(\mathcal{F}_n, \Omega_n, \mathbb{P}_n)$ be the probability space generated by $(B_1, \cdots,B_n)$.  Also, denote by $\theta_n$ the natural embedding of $\mathcal{F}_n$ into  $\mathcal{F}$, i.e., for every $F \in \mathcal{F}_n$
\begin{equation*}
\theta_n(F) =\{ (b_1, b_2,\cdots, b_n, b_{n+1}, \cdots) \in \Omega \mid (b_1, \cdots, b_n ) \in F \}.
\end{equation*}
We have $\mathbb{P}_n (F) = \mathbb{P}(\theta_n (F))$.
We now couple the process $W_n$ with the sequence $\{B_i\}$:
\begin{equation}
W_{n}=  \left\{
\begin{array}{lr}
W_{n-1} ^{+} &  ; \text{if $B_n=1$},\\
W_{n-1} ^{-} &  ; \text{if $B_n=0$}.
\end{array} \right.
\end{equation}
As a result, $Z_n = Z(W_n)$ is coupled with the sequence $\{B_i\}$.
By using the bounds given in \cite[Chapter 4]{RiU08} we have the
following relationship between the Bhattacharyya  parameters of  $W
^+ $, $W^-$ and $W$:
\begin{align*} \label{general_Z}
& Z(W^+)= Z(W)^2, \\
& Z(W) \sqrt{2- Z(W)^2} \leq Z(W^-) \leq 2Z(W) - Z(W)^2.
\end{align*}
As a result, for a BMS channel $W$, the process $Z_n=Z(W_n)$ satisfies (\cite[Lemma 3.16]{Kor09thesis})
 \begin{equation} \label{general_process}
 Z_{n}  \left\{
\begin{array}{lr}
={Z_{n-1}}^2 &  ; \text{if $B_n=1$},\\
\in [Z_{n-1} \sqrt{2-{Z_{n-1}}^2} , 2Z_n-{Z_{n-1}}^2] &  ; \text{if $B_n=0$}.
\end{array} \right.
\end{equation}
Consider two processes $Z_n ^{u}$ and $Z_n ^{l}$ given by $Z_0 ^{u}=Z_0 ^{l}=Z(W)$,
\begin{equation} \label{upper_process}
{Z_n ^{u}}=  \left\{
\begin{array}{lr}
(Z_{n-1} ^{u})^2 &  ; \text{if $B_n=1$},\\
2Z_{n-1} ^{u} &  ; \text{if $B_n=0$},
\end{array} \right.
\end{equation}
and
\begin{equation} \label{lower_process}
{Z_n ^{l}}=  \left\{
\begin{array}{lr}
(Z_{n-1} ^{l})^2 &  ; \text{if $B_n=1$},\\
Z_{n-1} ^{l} &  ; \text{if $B_n=0$}.
\end{array} \right.
\end{equation}
Clearly, $Z_n$ stochastically dominates $Z_n ^{l}$ and is stochastically dominated by $Z_n ^{u}$. Also, it is easy to see that $Z_n ^{l}=(Z(W)^{2^{\sum_{i=1}^n B_i}}$. Thus
\begin{align} \label{lower_bound_map}
& \mathbb{P}(Z_n \geq (Z(W))^{2^{\sum_{i=1}^n B_i}} )\\
&= \mathbb{P}(Z_n \geq 2^{\log(Z(W))2^{\sum_{i=1}^n B_i}} ) \nonumber \\
&=1.\nonumber
\end{align}
The following lemma partially analyzes the behavior of $Z_n ^{u}$.
\begin{lemma}\label{main_lemma}
For the process $Z_n ^{u}$ (defined in \eqref{upper_process}) starting at $Z_0 ^{u}=z_0 ^{u} \in (0,1)$ we have:
\begin{equation}\label{eq:behavior}
 \mathbb{P} (Z_n ^{u} \leq 2^{- \beta 2^{\sum_{i=1} ^{n} B_i}}) \geq 1-2^{1+\frac{\beta}{2}}\sqrt{z_0^{u}}.
\end{equation}
\end{lemma}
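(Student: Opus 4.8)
The plan is to analyze the process $Z_n^u$ by tracking how it evolves along a sample path $(B_1,\dots,B_n)$. The key observation is that a ``$+$'' step (when $B_i=1$) squares the current value, which, once the value is below $1$, is a strong contraction in the exponent, while a ``$-$'' step (when $B_i=0$) merely doubles it. So if we write $Z_k^u = 2^{-a_k}$ (allowing $a_k$ to go negative early on), then $a_k = 2a_{k-1}$ when $B_k=1$ and $a_k = a_{k-1}-1$ when $B_k=0$. The issue is the initial transient: before the exponent becomes comfortably positive, a run of ``$-$'' steps can drive $Z_k^u$ above $1$, after which the recursion is useless. So the heart of the argument is a union bound over the ``bad'' event that the partial path ever lets $Z_k^u$ exceed (say) $\tfrac12$, or more precisely that it fails to reach a safe regime quickly enough.

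\medskip

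\noindent\textbf{Step 1 (reduction to a first-passage/one-dimensional random walk).} First I would set $z_0^u = z_0$ and note $Z_k^u = z_0^{\,\prod \text{doublings}}$ is not literally a clean product, so instead I track $L_k = \log_2 Z_k^u$, which starts negative ($L_0=\log_2 z_0 <0$) and satisfies $L_k = 2L_{k-1}$ if $B_k=1$, $L_k = L_{k-1}+1$ if $B_k=0$. I want to show that with probability at least $1 - 2^{1+\beta/2}\sqrt{z_0}$, for all $n$ we have $L_n \le -\beta\, 2^{\sum_{i\le n} B_i}$. The natural approach: let $\tau$ be the first time a ``$+$'' occurs. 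Up to time $\tau$ the value only doubles, so $L_\tau = \log_2 z_0 + (\tau-1)$ and $Z_\tau^u = 2^{\tau-1} z_0$. If $2^{\tau-1} z_0 \le \tfrac12$, i.e. $\tau \le \log_2(1/z_0)$, then after that first squaring the value is at most $(\tfrac12)^2 = \tfrac14$ of... — more carefully, once $Z_k^u \le \tfrac12$ at some time, each subsequent ``$-$'' at most doubles it (keeping it $\le 1$) and each ``$+$'' squares it, and one checks inductively that from any safe starting point the claimed bound $Z_n^u \le 2^{-\beta 2^{\#\{+\text{'s so far}\}}}$ holds provided the very first squaring happened while the value was small enough. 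Actually the cleanest route is: condition on the number $j = \sum_{i=1}^n B_i$ of ``$+$'' steps and their locations; show that the bound holds deterministically unless the prefix before the first ``$+$'' is too long.

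\medskip

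\noindent\textbf{Step 2 (the probability estimate).} The bad event is essentially $\{\,$the first ``$+$'' is preceded by more than $m$ ``$-$''s$\,\}$ together with finitely many similar early-time events, where $m$ is chosen so that $2^{m-1} z_0 \cdot(\text{a factor from } \beta)$ is still $\le 1$. Since the $B_i$ are iid Bernoulli$(\tfrac12)$, the probability that the first ``$+$'' comes after position $m$ is $2^{-m}$. Choosing $m \approx \log_2(1/z_0) - \text{const}(\beta)$ gives a failure probability on the order of $2^{-m} \approx \text{const}(\beta)\, z_0$, and a careful bookkeeping of the threshold — one needs the value after the first squaring to be below $2^{-\beta}$ raised to the right power — produces the explicit constant $2^{1+\beta/2}$ and the square root $\sqrt{z_0}$ (the square root appearing because one only needs $2^{m-1}z_0 \le 2^{-\beta/2}$-type slack, i.e. $m \approx \tfrac12\log_2(1/z_0)$, not the full $\log_2(1/z_0)$, after absorbing the squaring). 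Summing the geometric tail of these early-time bad events and the per-step slack gives the stated bound.

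\medskip

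\noindent\textbf{Main obstacle.} The delicate part is getting the constant exactly right: one must be careful that ``$-$'' steps occurring \emph{after} the value has entered the safe region do not accumulate to push $2^{-\beta 2^{j}}$ above $Z_n^u$ — this is where one uses that in the safe region a ``$-$'' doubles $Z^u$ but also the comparison target $2^{-\beta 2^{j}}$ is unchanged (since $j$ counts only ``$+$''s), and a ``$+$'' squares $Z^u$ (helping a lot) while only doubling $j$. So the induction hypothesis must be stated with enough multiplicative room, of the form $Z_k^u \le 2^{-\beta 2^{j_k}} \cdot 2^{-(\text{something nonnegative})}$, and one verifies this is preserved by both moves once it holds at the first post-squaring time. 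I expect the bulk of the work to be this invariant bookkeeping plus the optimization of $m$ that yields $1 - 2^{1+\beta/2}\sqrt{z_0}$.
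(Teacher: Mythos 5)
Your reduction to the log-domain recursion ($L_k=\log_2 Z_k^u$, doubling on a ``$+$'' and adding $1$ on a ``$-$'') matches the paper's first step (the paper works with $A_n=-\log Z_n^u$), but the core of your plan -- that the bound ``holds deterministically unless the prefix before the first `$+$' is too long,'' so the bad event is a union of finitely many early-time events of total probability $O(\sqrt{z_0})$ -- is not correct. The comparison target $2^{-\beta 2^{\sum_{i\le n}B_i}}$ is left unchanged by every ``$-$'' step while $Z^u$ doubles, so from \emph{any} state, no matter how small $Z^u$ currently is, a sufficiently long run of ``$-$'' steps (which has positive probability at every time, including just before time $n$) violates the inequality and indeed pushes $Z^u$ above $1$. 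Consequently there is no absorbing ``safe region,'' and no invariant of the form $Z_k^u\le 2^{-\beta 2^{j_k}}\cdot 2^{-c_k}$ with $c_k\ge 0$ can be ``preserved by both moves'': a ``$-$'' move strictly eats the slack and can make it negative. The failure events occur at all scales, with probabilities governed by the doubly-exponentially growing cushion created by earlier ``$+$'' steps, and the entire content of the lemma is a tail bound, uniform in $n$, for the accumulated discounted deficit.

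The paper supplies exactly this missing estimate via the run decomposition: for $b_1=1$ one has the identity
\begin{equation*}
A_n \;=\; 2^{\sum_{i=1}^n B_i}\Bigl(a_0-\sum_{\text{$i$ even}} r_i\,2^{-\sum_{\text{$j$ odd},\,j<i} r_j}\Bigr),
\end{equation*}
so the failure event is $\{X\ge a_0-\beta\}$ with $X=\sum_{\text{$i$ even}} R_i 2^{-\sum_{\text{$j$ odd},\,j<i}R_j}$, and its probability is controlled by Markov's inequality applied to $2^X$ together with a self-similarity fixed-point argument giving $\mathbb{E}[2^X]\le 2.87$; a separate conditioning on the first run handles $B_1=0$, and it is there that the square root appears, by absorbing a polynomial factor ($(a_0-\beta+1)2^{-(a_0-\beta)}\le \text{const}\cdot 2^{-(a_0-\beta)/2}$), not from choosing $m\approx\tfrac12\log_2(1/z_0)$ as you suggest (your own bookkeeping is inconsistent on this point, first taking $m\approx\log_2(1/z_0)$ and then $m\approx\tfrac12\log_2(1/z_0)$). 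Your sketch offers no substitute for this uniform tail estimate on the discounted sum, so as written the argument has a genuine gap; to repair it along your lines you would have to turn the ``union over early-time events'' into an infinite union over all times with cushions tracked explicitly, which is essentially the paper's computation in disguise.
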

\begin{proof}
We  analyze the process\footnote{In this paper, all the logarithms
are in base 2.} $A_n= -\log (Z_n ^{u})$ , i.e., $A_0=- \log (z_0
^{u}) \triangleq a_0$ and
\begin{equation} \label{A}
A_{n+1}=  \left\{
\begin{array}{lr}
2{A_n} &  ; \text{if } B_n=1,\\
A_n - 1   &  ; \text{if } B_n=0.
\end{array} \right.
\end{equation}
Note that in terms of the process $A_n$, the statement of the lemma can be phrased as
\begin{equation*}
 \mathbb{P} (A_n \geq \beta 2^{\sum_{i=1} ^{n} B_i}) \geq 1- \frac{2}{2^{\frac{a_0-\beta}{2}}}.
\end{equation*}
Associate to each $ (b_1, \cdots, b_n) \triangleq \omega_n  \in
\Omega_n$ a sequence of "runs" $(r_1, \cdots , r_{k(\omega_n)})$. This sequence
is constructed by the following procedure. We define $r_1$ as the smallest index
$i \in \naturals$ so that $b_{i+1} \neq b_1$. In general, if $\sum_{j=1}
^{k-1} r_j < n$ then
\begin{align*}
r_k = \min \{i \mid  \sum_{j=1} ^{k-1} r_j < i \leq n , b_{i+1} \neq b_{\sum_{j=1} ^{k-1} r_j}\}-\sum_{j=1} ^{k-1} r_j.
\end{align*}
The process stops whenever the sum of the runs equals $n$. Denote the
stopping time of the process by $k(\omega_n)$. In words, the sequence
$(b_1,\cdots, b_n)$ starts with $b_1$. It then repeats $b_1$, $r_1$ times.
Next follow $r_2$ instances of $\overline{b_1}$, followed again by $r_3$
instances of $b_1$, and so on. We see that $b_1$ and $(r_1, \cdots,
r_{k(\omega_n)})$ fully describe $\omega_n =(b_1,\cdots, b_n) $. Therefore,
there is a one-to-one map
\begin{equation} \label{one-to-one}
 (b_1,\cdots, b_n) \longleftrightarrow \{ b_1, (r_1, \cdots, r_{k(\omega_n)})\}.
\end{equation}
Note that we can either have $b_1 = 1$ or $b_1 = 0$. We start with the first case, i.e., we first assume $B_1=1$. We have:
\begin{equation*}
\sum_{i=1}^{n} b_i = \sum_{ \text{$j$ odd $\leq k(\omega_n)$}} r_j,
\end{equation*}
and
\begin{equation*}
 n = \sum_{j=1} ^{k(\omega_n)} r_j.
\end{equation*}
Analogously, for a realization $(b_1,b_2, \cdots) \triangleq \omega \in \Omega$ of the infinite sequence of random variable $\{B_i\} _{i \in \naturals}$, we can associate a sequence of runs $(r_1,r_2, \cdots)$. In this regard, considering the infinite sequence of random variables $\{B_i\} _{i \in \naturals}$ (with the extra condition $B_1 =1$), the corresponding sequence of runs, which we denote by $\{ R_k \}_{k \in \naturals}$, is an iid sequence with $\mathbb{P}(R_i= j) = \frac{1}{2^j}$.
Let us now see how we can express the $A_n$ in terms of the $r_1,
r_2, \cdots, r_{k(\omega_n)}$.  We begin by a simple example: Consider
the sequence $(b_1=1, b_2, \cdots,b_8)$ and the associated run sequence
$(r_1,\cdots, r_5)=(1,2,1,3,1)$. We have
\begin{align*}
A_{1}    &= a_0 2^{r_1} , \\
A_{3}    &= a_0 2^{r_1}  - r_2, \\
A_{4}    &= (a_0 2^{r_1}  - r_2) 2^{r_3}= a_0 2^{r_1+r_3}  - r_2 2^{r_3}  , \\
A_{7}    &= (a_0 2^{r_1}  - r_2) 2^{r_3} - r_4= a_0 2^{r_1+r_3}  - r_2 2^{r_3} - r_4  ,\\
A_{8}    &= ((a_0 \times 2^{r_1} - r_2) \times 2^{r_3} -r_4) \times 2^{r_5} \\
         & = a_0 2^{r_1+r_3+r_5} -r_2 2^{r_3+r_5} - r_4 2^{r_5} \\
         & = 2^{r_1+r_3+r_5}( a_0 - 2^{-r_1} r_2- 2^{-(r_1+r_3)} r_4).
\end{align*}
In general, for a sequence $(b_1, \cdots, b_n)$ with the associated run
sequence $(r_1,\cdots, r_{k(\omega_n)})$ we can write:
\begin{align*}
A_{n} & = a_0 2^{\sum_{\text{$i$ odd $\leq k(\omega_n)$}} r_i} - \!  \!  \!  \!  \!  \!  \! \sum_{\text{$i$ even $\leq k(\omega_n)$}} \!  \!  \!  \!  \!  \!  \! r_i  2 ^{\sum_{\text{$i < j$ odd }} r_j} \\
&= a_0 2^{\sum_{\text{$i$ odd $\leq k(\omega_n)$}} r_i} -\!  \!  \!  \!  \!  \!  \!  \!  \!  \!  \!  \!  \sum_{\!  \text{$i$ even $\leq k(\omega_n)$}} \!  \!  \!  \!  \!  \!  \!  \!   r_i  2 ^{(-\sum_{\text{$j$ odd $< i$}} r_j + \sum_{\text{$i$ odd $\leq k(\omega_n)$}} r_i )} \\
&= [2^{\sum_{\text{$i$ odd $\leq k(\omega_n)$}} r_i}][a_0 -  (\!  \!  \!  \!  \!  \!  \! \sum_{\text{$i$ even $\leq k(\omega_n)$}} \!  \!  \!  \!  \!  \!  \! r_i  2 ^{-\sum_{\text{$j$ odd $< i$}} r_j} ) ] \\
&= [2^{\sum_{i=1} ^{n} B_i}] [a_0 -  (\!  \!  \!  \!  \!  \!  \! \sum_{\text{$i$ even $\leq k(\omega_n)$}} \!  \!  \!  \!  \!  \!  \! r_i  2 ^{-\sum_{\text{$j$ odd $< i$}} r_j }) ].
\end{align*}
Our aim is to lower-bound
\begin{align*}
& \mathbb{P} (A_n \geq \beta 2^{\sum_{i=1} ^{n} B_i})\\
&= \mathbb{P}_n  (a_0 -  \sum_{\text{$i$ even $\leq k(\omega_n)$}} r_i  2 ^{-\sum_{\text{$j$ odd $< i$}} r_j} \geq \beta ),
\end{align*}
or, equivalently, to upper-bound
\begin{equation}  \label{prob}
\mathbb{P}_n (\sum_{\text{$i$ even $\leq k(\omega_n)$}} r_i  2 ^{-\sum_{\text{$j$ odd $< i$}} r_j} \geq a_0-\beta).
\end{equation}
For $n \in \naturals$, define the set $U_n \in \mathcal{F}_n$ as
\begin{equation*}
U_n =  \{ \omega_n \in \Omega_n \mid  \exists l \leq k(\omega_n) : \sum_{\text{$i$ even $\leq l$}} r_i  2 ^{-\sum_{\text{$j$ odd $< i$}} r_j} \geq a_0-\beta   \}.
\end{equation*}
Clearly we have:
\begin{equation*}
\mathbb{P}_n  (\sum_{\text{$i$ even $\leq k(\omega_n)$}} r_i  2 ^{-\sum_{\text{$j$ odd $< i$}} r_j} \geq a_0-\beta ) \leq \mathbb{P}_n(U_n).
\end{equation*}
In the following we show that if   $(b_1, \cdots,b_n) \in U_{n} $, then for any choice of $b_{n+1}$, $(b_1, \cdots,b_n,b_{n+1}) \in U_{n+1}$. We will only consider the case when $b_n,b_{n+1}=1$, the other three cases can be verified similarly. Let $\omega_n = (b_1, \cdots, b_{n-1},b_n=1) \in U_n$. Hence, $k(\omega_n)$ is an odd number (recall that $b_1=1$) and the quantity $\sum_{\text{$i$ even $\leq k(\omega_n)$}} r_i  2 ^{-\sum_{\text{$j$ odd $< i$}} r_j}$ does not depend on $r_{k(\omega_n)}$. Now consider the sequence  $\omega_{n+1}=(b_1, \cdots,b_n=1, 1)$. Since the last bit ($b_{n+1}$) equals $1$, then $r_{k(\omega_{n+1})}=r_{k(\omega_n)}$ and the value of the sum remains unchanged. As a result $(b_1, \cdots,b_n,1) \in U_{n+1}$.
From above, we conclude that $\theta_i(U_i) \subseteq \theta_{i+1}(U_{i+1})$ and as a result
\begin{align*}
\mathbb{P}_i (U_i) &= \mathbb{P}(\theta_i(U_i)) \leq \mathbb{P}(\theta_{i+1}(U_{i+1})) =\mathbb{P}_{i+1}(U_{i+1}).
\end{align*}
Hence, the quantity $\lim_{n \to \infty} \mathbb{P}_n (U_n) =\lim_{n \to \infty} \mathbb{P}  (\theta_n(U_n))=\lim_{n \to \infty} \mathbb{P}(\cup_{i=1}^{n} \theta_i (U_i))$ is an upper bound on \eqref{prob}.
On the other hand, consider the set
\begin{equation*}
V =  \{ \omega \in \Omega \mid \exists l  : \sum_{\text{$i$ even $\leq l$}} r_i  2 ^{-\sum_{\text{$j$ odd $< i$}} r_j} \geq a_0-\beta \}.
\end{equation*}
By the definition of  $V$ we  have $ \cup_{i=1} ^{\infty} \theta_i (U_i) \subseteq V$, and as a result,  $\mathbb{P} (\cup_{i=1} ^{\infty} \theta_i (U_i)) \leq \mathbb{P}(V)$.   In order to bound the probability of the set $V$, note that assuming $B_1=1$,  the sequence $\{ R_k \}_{k \in \naturals}$ (i.e., the sequence of runs when associated with the sequence $\{B_i\}_{i \in \naturals}$) is an iid sequence with $\mathbb{P}(R_i= j) = \frac{1}{2^j}$.  We also have
\begin{align} \label{markov}
& \mathbb{P} (a_0 -  \sum_{\text{$i$ even $\leq m$}} R_i  2 ^{-\sum_{\text{$j$ odd $ < i$}} R_j} \leq \beta )\\
& = \mathbb{P}  (\sum_{\text{$i$ even $\leq m$}} R_i  2 ^{-\sum_{\text{$j$ odd $ < i$}} R_j} \geq  a_0 -\beta) \nonumber \\
& = \mathbb{P}  (2^{\sum_{\text{$i$ even $\leq m$}} R_i  2 ^{-\sum_{\text{$j$ odd $ < i$}} R_j}} \geq  2^{a_0 -\beta})\nonumber \\ 
& \leq \frac{\mathbb{E}[2^{\sum_{\text{$i$ even $\leq m$}} R_i  2 ^{-\sum_{\text{$j$ odd $ < i$}} R_j }}]}{2^{a_0-\beta}}, \nonumber
\end{align}
where the last step follows from the Markov inequality. The idea is now to provide an upper bound on the quantity $\mathbb{E}[2^{\sum_{\text{$i$ even $\leq m$}} R_i  2 ^{-\sum_{\text{$j$ odd $ < i$}} R_j }}]$. Let $X=\sum_{\text{$i$ even $\leq m$}} R_i  2 ^{-\sum_{\text{$j$ odd $ < i$}} R_j }$. We have
\begin{align*}
& \mathbb{E}[2^X] \\
& =\sum_{l=1}^{\infty} \mathbb{P}(R_2 = l) \mathbb{E}[2^X \mid R_2= l]\\
& \stackrel{a}{=}\sum_{l=1}^{\infty} \frac{1}{2^{l}} \mathbb{E}[2^X \mid R_2= l]\\
&=  \sum_{l=1}^{\infty} \frac{1}{2^{l}}  \mathbb{E}[2^{\frac{R_1}{2^{l}}}]\mathbb{E}[2^{\frac{X}{2^{l}}}]\\
& = \sum_{l=1}^{\infty} \frac{1}{2^{l}( 2^{1-\frac{1}{2^l}})} \mathbb{E}[2^{\frac{X}{2^{l}}}]\\
& \stackrel{b}{\leq} \sum_{l=1}^{\infty} \frac{1}{2^{l}( 2^{1-\frac{1}{2^l}})}  (\mathbb{E}[2^X])^{\frac{1}{2^l}},\\
\end{align*}
where (a) follows from the fact that $R_i$s are iid and $X$ is self-similar and (b) follows from Jensen inequality. As a result , an upper bound on the quantity $\mathbb{E}[2^X]$ can be derived as follows. We have
\begin{small}
\begin{align*}
& \mathbb{E}[2^X  \! \! \! ] \leq  \! \! \!  \frac{1}{2(2^{\frac 12}-1 )} ( \mathbb{E}[2^X])^{\frac{1}{2}}  \! \! \!  +  \! \! \!  \frac{1}{4(2^{\frac 34}-1 )}  (\mathbb{E}[2^X])^{\frac{1}{4}}  \! \! \!  +  \! \! \!  \frac{1}{4(2^{\frac 78}-1 )}  (\mathbb{E}[2^X])^{\frac{1}{8}}.
\end{align*}
\end{small}
The equation $ y = \frac{1}{2(2^{\frac 12}-1 )}  y^{\frac{1}{2}} + \frac{1}{4(2^{\frac 34}-1 )}  y^{\frac{1}{4}} + \frac{1}{4(2^{\frac 78}-1 )}  y^{\frac{1}{8}}$ has only one real valued solution $y^* \leq  2.87$. As a result we have $\mathbb{E}[2^X] \leq y^* \leq 2.87$. Thus by \eqref{markov} we obtain 
\begin{align*}
\mathbb{P} (a_0 -  \sum_{\text{$i$ even $\leq m$}} R_i  2 ^{-\sum_{\text{$j$ odd $ < i$}} R_j} \leq \beta ) \leq \frac{2.87}{2^{a_0 - \beta}}
\end{align*}
Thus, given that $B_1=1$, we have:
\begin{equation*}
 \mathbb{P} (A_n \geq \beta 2^{\sum_{i=1} ^{n} B_i}) \geq 1- \frac{2.87}{2^{a_0-\beta}}.
\end{equation*}
Or more precisely we have
\begin{equation*}
 \mathbb{P} (A_n \geq \beta 2^{\sum_{i=1} ^{n} B_i} \mid B_1=1) \geq 1- \frac{2.87}{2^{a_0-\beta}}.
\end{equation*}
 Now consider the case $B_1=0$. We show that a similar bound applies for $A_n$. Firstly note that, fixing the value of $n$, the distribution of $R_1$ is as follows: $\mathbb{P}(R_i)=\frac{1}{2^i}$ for $1 \leq i \leq n-1$ and $\mathbb{P}(R_1=n) = \frac{1}{2^{n-1}}$. We have
\begin{small}
\begin{align*}
 & \mathbb{P} (A_n \geq \beta 2^{\sum_{i=1} ^{n} B_i} \mid B_1=0)\\
& = \sum_{i=1}^{n} \mathbb{P} (A_n \geq \beta 2^{\sum_{i=1} ^{n} B_i} \mid R_1 = i, B_1=0) \mathbb{P}(R_1 =i \mid B_1=0)\\
& =  \! \! \!  \! \! \!  \! \! \!   \sum_{i \leq a_0 -\beta,  i \leq n}  \! \! \!  \! \! \!  \! \! \!   \mathbb{P} (A_n \geq \beta 2^{\sum_{i=1} ^{n} B_i}\mid R_1 = i, B_1=0)\mathbb{P}(R_1 =i \mid B_1=0)\\
& + \sum_{i > a_0 -\beta, i \leq n } ^{n} \mathbb{P}(R_1=i \mid B_1 =0)\\
&\leq  \sum_{i \leq a_0 -\beta,  i \leq n} \frac{1}{2^i} \frac{2.87}{2^{a_0 - \beta -i}} + \frac{2}{2^{a_0-\beta}}\\
&\leq \frac{2.87(a_0-\beta+1)}{2^{a_0-\beta}}\\
&\leq \frac{3}{2^{\frac{a_0-\beta}{2}}}.
\end{align*}
\end{small}
Hence, considering the two cases together, we have:
\begin{equation*}
 \mathbb{P} (A_n \geq \beta 2^{\sum_{i=1}^{n} B_i}) \geq 1- \frac{2}{2^{\frac{a_0-\beta}{2}}}.
\end{equation*}
\end{proof}
As a result of the above lemma, if the initial point of the process $Z_n^u$ is sufficiently close to zero, its behavior is close to the behavior of the process $Z_n^l$. The same phenomenon occurs for the process $Z_n$ since it is sandwiched between $Z_n^l$ and $Z_n^u$. The following statement relates the behavior of the processes $Z_n^u$ and $Z_n^l$.
\begin{corollary} \label{E(n,x)}
Let $Z_n^u$ be the process given in \eqref{upper_process} with $Z_0^u=z_0^u \in (0,1)$. For $x \in (0,1)$ we have
\begin{equation}
\mathbb{P}(Z_n ^{u} \leq 2^{-2^{E(n,x)}}) \geq x-2\sqrt{2} \sqrt{z_0^u}-o(\frac{1}{\sqrt{n}}).
\end{equation}
\end{corollary}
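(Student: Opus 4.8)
The plan is to plug the choice $\beta=1$ into Lemma~\ref{main_lemma} and to read off the distribution of $S_n:=\sum_{i=1}^{n}B_i$, which is exactly $\mathrm{Binomial}(n,\tfrac12)$, from the implicit definition \eqref{E(n,R)} of $E(n,x)$. With $\beta=1$ the coefficient $2^{1+\beta/2}$ in \eqref{eq:behavior} becomes $2^{3/2}=2\sqrt2$, so Lemma~\ref{main_lemma} gives
\begin{equation*}
\mathbb{P}\bigl(Z_n^{u}\le 2^{-2^{S_n}}\bigr)\ \ge\ 1-2\sqrt2\,\sqrt{z_0^{u}} .
\end{equation*}

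Next I would translate \eqref{E(n,R)} into a statement about the tail of $S_n$. Since $\mathbb{P}(S_n\ge k)=2^{-n}\sum_{i=k}^{n}\binom{n}{i}$, the left inequality in \eqref{E(n,R)} reads $\mathbb{P}\bigl(S_n\ge E(n,x)\bigr)\le x$ and the right one reads $\mathbb{P}\bigl(S_n\ge E(n,x)-1\bigr)\ge x$; that is, $E(n,x)$ is essentially the upper $x$-quantile of $S_n$. Peeling off the single atom $\mathbb{P}\bigl(S_n=E(n,x)-1\bigr)=\binom{n}{E(n,x)-1}2^{-n}$ from the second estimate and bounding it crudely by the central binomial coefficient $\binom{n}{\lfloor n/2\rfloor}2^{-n}$ (which is $O(1/\sqrt n)$ by Stirling), one gets
\begin{equation*}
\mathbb{P}\bigl(S_n\ge E(n,x)\bigr)\ \ge\ x-o(1/\sqrt{n}),
\end{equation*}
which is the source of the vanishing correction term in the statement.

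It then remains to intersect the two events. On $\{S_n\ge E(n,x)\}$ the map $t\mapsto 2^{-2^{t}}$ is decreasing, so $2^{-2^{S_n}}\le 2^{-2^{E(n,x)}}$; hence $\{Z_n^{u}\le 2^{-2^{S_n}}\}\cap\{S_n\ge E(n,x)\}\subseteq\{Z_n^{u}\le 2^{-2^{E(n,x)}}\}$, and the Bonferroni bound $\mathbb{P}(A\cap B)\ge\mathbb{P}(A)-\mathbb{P}(B^{c})$ yields
\begin{equation*}
\mathbb{P}\bigl(Z_n^{u}\le 2^{-2^{E(n,x)}}\bigr)\ \ge\ \bigl(1-2\sqrt2\,\sqrt{z_0^{u}}\bigr)-\bigl(1-x+o(1/\sqrt{n})\bigr)\ =\ x-2\sqrt2\,\sqrt{z_0^{u}}-o(1/\sqrt{n}),
\end{equation*}
which is the claim. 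There is no real difficulty in this argument; the only step needing care is the passage from the two-sided, implicit characterization \eqref{E(n,R)} of $E(n,x)$ to a clean one-sided lower bound on $\mathbb{P}(S_n\ge E(n,x))$, where one necessarily loses the binomial point mass $\binom{n}{\lfloor n/2\rfloor}2^{-n}=\Theta(1/\sqrt n)$ — this is exactly the quantity recorded as $o(1/\sqrt n)$ in the corollary — and one should also check that $\beta=1$ is an admissible (and in fact the natural) choice in Lemma~\ref{main_lemma}.
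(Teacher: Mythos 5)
Your proposal is correct and follows essentially the same route as the paper: set $\beta=1$ in Lemma~\ref{main_lemma} to get $\mathbb{P}(Z_n^u\le 2^{-2^{\sum_i B_i}})\ge 1-2\sqrt{2}\sqrt{z_0^u}$, read $E(n,x)$ as the upper $x$-quantile of the Binomial$(n,\tfrac12)$ tail via \eqref{E(n,R)}, and combine the two events with the bound $\mathbb{P}(A\cap B)\ge \mathbb{P}(A)+\mathbb{P}(B)-1$. Your explicit accounting of the lost point mass $\binom{n}{E(n,x)-1}2^{-n}=\Theta(1/\sqrt n)$ is in fact more careful than the paper, which simply asserts the $o(1/\sqrt n)$ term; the resulting $O(1/\sqrt n)$ versus $o(1/\sqrt n)$ discrepancy is inherited from the corollary's own statement and is immaterial to how it is used later, since only the vanishing of this term as $n\to\infty$ matters.
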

\begin{proof}
Recall  $E(n,x)$ from \eqref{E(n,x)} and let the two events $A$ and $B$ be defined as follows,
\begin{align*}
&A=\{(b_1, \cdots, b_n)\in \Omega_n \mid Z_n ^{u} (b_1,\cdots,b_n) \leq 2^{- 2^{\sum_{i=1} ^{n} b_i}} \},\\
&B=\{(b_1, \cdots, b_n)\in \Omega_n \mid 2^{-2^{\sum_{i=1} ^{n} b_i}} \leq 2^{- 2^{E(n,x)}} \}.
\end{align*}
By inserting $\beta=1$ in Lemma~\ref{main_lemma} we obtain $\mathbb{P}(A) \geq 1- 2 \sqrt{2}\sqrt{z_0^u}$ and
\begin{align*}
\mathbb{P}(B)&=\mathbb{P}(\sum_{i=1}^n B_i \geq E(n,x))\\
&\geq x-o(\frac{1}{\sqrt{n}}).
\end{align*}
As a result,
 \begin{align*}
 &\mathbb{P}(Z_n \leq 2^{-2^{E(n,x)}}) \\
 &\geq \mathbb{P}(A \cap B)\\
 &= \mathbb{P}(A) +\mathbb{P}(B)-\mathbb{P}(A \cup B)\\
 &\geq \mathbb{P}(A)+\mathbb{P}(B)-1\\
 &\geq x-2\sqrt{2}\sqrt{z_0^u}-o(\frac{1}{\sqrt{n}}).
 \end{align*}
\end{proof}
\subsection {Proof of Theorem~\ref{main_result}}
We start with the proof of part (1). The main idea behind the proof is to analyze the behavior of the process $Z_n$ once its value is sufficiently close to the endpoints of the interval. In this regard, we first give a bound on the speed of converging to the endpoints. The proof of following lemma is given in the appendix.
\begin{lemma} \label{gen_bound}
Let $W$ be a BMS channel and $Z_n=Z(W_n)$ be the corresponding Bhattacharyya process.  Let $\rho \in ((\frac{1.85}{2})^{\frac 23},1)$ be a fixed constant. There exist constants $\alpha_1,\alpha_2 \geq 0$, independent on $\rho$, such that
\begin{enumerate}
\item[(a)] $\mathbb{P}(Z_n \leq 2{\rho}^{n})\geq  I(W)-\alpha_1 {\rho}^{\frac{n}{2}}$.
\item[(b)] $\mathbb{P}(Z_n \geq 1 - 2{\rho}^{n})\geq 1-I(W)-\alpha_2 {\rho}^n$.
\end{enumerate}
\QED
\end{lemma}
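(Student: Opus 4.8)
The plan is to obtain both statements from a single quantitative fact about the process itself — that a suitable ``polarization potential'' $g(Z_n)$, vanishing at both endpoints of $[0,1]$, contracts \emph{geometrically} in expectation. From $Z(W^+)=Z(W)^2$ and $Z(W)\sqrt{2-Z(W)^2}\le Z(W^-)\le 2Z(W)-Z(W)^2$ one first records the elementary facts that $Z_n$ is a supermartingale and $Z_n^{2}$ a submartingale, so that $\mathbb{P}(Z_\infty=1\mid\mathcal F_n)=\mathbb{E}[Z_\infty\mid\mathcal F_n]\ge Z_n^{2}$, i.e.\ $\mathbb{P}(Z_\infty=0\mid Z_n=z)\le 1-z^{2}$, and likewise $\mathbb{P}(Z_\infty=1\mid Z_n=z)\le z$. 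The potential I would use is $g(z)=(z(1-z))^{s}$ for a fixed $s\in(0,1)$. Estimating $\tfrac12 g(Z(W_n^+))+\tfrac12 g(Z(W_n^-))$ branch by branch, one checks that $\mathbb{E}[g(Z_{n+1})\mid\mathcal F_n]\le\kappa\,g(Z_n)$ for a constant $\kappa<1$: near each endpoint the squaring branch over-contracts while the other branch inflates $g$ by at most a bounded factor (so the ratio is $\approx 2^{s-1}$), and on any bulk interval $[\zeta_0,1-\zeta_0]$ the ratio stays bounded below $1$. The resulting $\kappa$ depends on $s$ and $\zeta_0$, and the value $(1.85/2)^{2/3}$ is exactly what optimizing this trade-off produces: the critical region is the edge $z\approx 1-\zeta_0$ of the bulk, where $Z(W^-)\le 2Z(W)-Z(W)^2$ is essentially attained and the relevant ratio is roughly $\big((2-\zeta_0)/2\big)^{s}$, the ``$1.85$'' being $2-\zeta_0$ for the optimal threshold. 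Concretely one arranges $\kappa\le\rho^{\,s+1/2}$ for every admissible $\rho$, i.e.\ $\kappa^{1/(s+1/2)}\le(1.85/2)^{2/3}$, and then $\mathbb{E}[g(Z_n)]\le\kappa^{n}$.

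Given this, part~(a) is a short argument. By Markov, $\mathbb{P}\big(Z_n(1-Z_n)>\rho^{n}\big)\le\kappa^{n}/\rho^{ns}=(\kappa\rho^{-s})^{n}$; and for $n$ large, $\{Z_n(1-Z_n)\le\rho^{n}\}\subseteq\{Z_n\le 2\rho^{n}\}\cup\{Z_n\ge 1-2\rho^{n}\}$ (the two roots of $z(1-z)=\rho^{n}$ lie within $2\rho^{n}$ of $0$ and of $1$), so
\[
\mathbb{P}(Z_n\le 2\rho^{n})\ \ge\ 1-(\kappa\rho^{-s})^{n}-\mathbb{P}(Z_n\ge 1-2\rho^{n}).
\]
Using $\mathbb{P}(Z_\infty=0\mid Z_n=z)\le 1-z^{2}$ gives $\mathbb{P}(Z_n\ge 1-2\rho^{n},\,Z_\infty=0)\le 1-(1-2\rho^{n})^{2}\le 4\rho^{n}$, hence $\mathbb{P}(Z_n\ge 1-2\rho^{n})\le 1-I(W)+4\rho^{n}$; substituting and using $\kappa\rho^{-s}\le\rho^{1/2}$ yields $\mathbb{P}(Z_n\le 2\rho^{n})\ge I(W)-(\kappa\rho^{-s})^{n}-4\rho^{n}\ge I(W)-\alpha_1\rho^{n/2}$, with $\alpha_1$ absorbing both error terms uniformly in $\rho$.

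Part~(b) is dual. The supermartingale property of $Z_n$ gives $\mathbb{P}(Z_n\le 2\rho^{n},\,Z_\infty=1)\le 2\rho^{n}$, hence $\mathbb{P}(Z_n\le 2\rho^{n})\le I(W)+2\rho^{n}$; combining with $\mathbb{P}(Z_n(1-Z_n)\le\rho^{n})\ge 1-(\kappa\rho^{-s})^{n}$ and the same inclusion gives $\mathbb{P}(Z_n\ge 1-2\rho^{n})\ge 1-I(W)-(\kappa\rho^{-s})^{n}-2\rho^{n}$. To reach the sharper $\rho^{n}$ error stated in~(b) one argues as in Lemma~\ref{main_lemma}: the potential estimate forces $1-Z_n$ below a small threshold with exponential accuracy, after which $1-Z_n$ follows the recursion of~\eqref{upper_process} with the two branches exchanged — using $1-Z(W^+)=(1-Z)(1+Z)$ and $1-Z(W^-)\le 1-Z\sqrt{2-Z^2}=\Theta((1-Z)^{2})$ — so Lemma~\ref{main_lemma} transfers verbatim with $\sum_iB_i$ replaced by the number of ``$-$'' steps, and its doubly-exponential decay easily beats $\rho^{n}$.

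The one genuinely delicate step is establishing the geometric contraction $\mathbb{E}[g(Z_{n+1})\mid\mathcal F_n]\le\kappa\,g(Z_n)$ with $\kappa$ small enough (below $\rho^{\,s+1/2}$ for every admissible $\rho$, and a bit more for the sharper exponent in~(b)): the contraction degrades towards the boundary of the bulk, and making the three regimes — near $0$, near $1$, and the interior — all contract at a common small rate requires the joint optimization of $s$ and $\zeta_0$ that produces the constant $(1.85/2)^{2/3}$. Everything downstream is routine bookkeeping with the sub/supermartingale facts above.
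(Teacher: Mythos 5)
Your overall strategy --- a potential $g(z)=(z(1-z))^s$ that contracts geometrically, then Markov, then a split of $[0,1]$ into two endpoint slivers and a bulk --- matches the paper exactly, which takes $s=\tfrac12$ and proves $\mathbb{E}[Q_n^{1/2}]\le\tfrac12(1.85/2)^n$ with $Q_n=Z_n(1-Z_n)$ (Lemma~\ref{Q}). Where you genuinely diverge, and improve, is in controlling the ``wrong-endpoint'' cross-terms: the paper invokes Lemma~\ref{lower_gen} (hence the full machinery of Lemma~\ref{main_lemma}, with its doubly-exponential run analysis) to bound the mass near $1$ that eventually goes to $0$, whereas you replace this by the two elementary facts that $Z_n$ is a bounded supermartingale (so $\mathbb{P}(Z_\infty=1\mid\mathcal F_n)\le Z_n$) and $Z_n^2$ a bounded submartingale (so $\mathbb{P}(Z_\infty=0\mid\mathcal F_n)\le 1-Z_n^2$). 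This is correct and substantially cleaner, and in fact gives a slightly sharper statement: your error for part~(a) is $O(\rho^n)$ rather than the paper's $O(\rho^{n/2})$ (the $\rho^{n/2}$ in the paper comes precisely from routing through Lemma~\ref{lower_gen}'s $\sqrt{1-e_0^2}$ bound). Your constants check out directly: with $s=\tfrac12$ and $\kappa=1.85/2$, Markov at threshold $\rho^n$ gives $\mathbb{P}(Q_n>\rho^n)\le\tfrac12((1.85/2)\rho^{-1/2})^n$, and the hypothesis $\rho\ge(1.85/2)^{2/3}$ is exactly the condition $(1.85/2)\rho^{-1/2}\le\rho$, so this is already $\le\tfrac12\rho^n$; combined with the $4\rho^n$ and $2\rho^n$ cross-terms from the martingale bounds, both parts follow with constants independent of $\rho$.

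Two caveats worth flagging. First, your final remark that reaching the $\rho^n$ rate in part~(b) requires ``arguing as in Lemma~\ref{main_lemma}'' is unnecessary: as the computation above shows, the same Markov-plus-supermartingale bookkeeping already delivers $\mathbb{P}(Z_n\ge 1-2\rho^n)\ge 1-I(W)-\tfrac52\rho^n$ without any doubly-exponential estimate. Second, your narrative about the origin of $1.85$ --- optimizing a bulk threshold $\zeta_0$ and identifying $1.85=2-\zeta_0$ --- does not match the paper: the constant is simply the numerical supremum of $\sqrt{x(1-x)/(z(1-z))}+\sqrt{z(1+z)}$ over $z\in[0,1]$ and $x\in[z\sqrt{2-z^2},\,z(2-z)]$, taken uniformly with no $\zeta_0$ anywhere. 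You correctly identify the contraction lemma as the one external input you are not re-deriving, but the story you tell about it is not the right one.
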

We then proceed by by providing upper and lower bounds on the quantity
 \begin{align*}
\mathbb{P}(Z_{n} \leq 2^{ - 2^{E(n,x)(1+ \Theta(\frac{f(n)}{E(n,x)}))}}),
\end{align*}
and by showing that as $n$ grows large, both of the bounds tend to R.
\subsubsection{Lower bound}
Fix $m \in \naturals$ and let $x =\frac{R}{I(W)}$. By  Lemma~\ref{gen_bound}, we have:
\begin{align*}
 \mathbb{P} (Z_{m} \leq 2 {\rho}^m) \geq I(W) -  \alpha_1 {\rho}^{\frac{m}{2}}.
\end{align*}
As a result,
\begin{align*}
&\mathbb{P}(Z_{n+m} \leq 2^{- 2^{E(n,x)}} )\\
&\geq \mathbb{P}(Z_{n+m} \leq 2^{- 2^{E(n,x)}}  \mid Z_{m} \leq 2 {\rho}^m) \mathbb{P}(Z_{m} \leq 2 {\rho}^m)\\
 &\geq \mathbb{P}(Z_{n+m} \leq 2^{- 2^{E(n,x)}}  \mid Z_{m} \leq 2 {\rho}^m)(I(W)-\alpha_1 {\rho}^{\frac{m}{2}})\\
 & \geq  (x-2\sqrt{2}{\rho}^{\frac{m}{2}} -o(\frac{1}{\sqrt{n}}))(I(W)-\alpha_1 {\rho}^{\frac{m}{2}}),
\end{align*}
where the last inequality follows from Corollary~\ref{E(n,x)} and the fact that assuming $Z_{m} \leq 2 {\rho}^m$, the process $Z_{n+m}$ is dominated by the process $Z_{n} ^u$ with the initial condition $Z_0^u=z_0^u=2 {\rho}^m$.
Now, since $E(n+m,x)-m \leq E(n,x)$ and $xI(W)=R$,  we have
\begin{align}\label{bound6}
&\mathbb{P}(Z_{n+m} \leq 2^{ - 2^{E(n+m,x) -m}} ) \\\nonumber
&\geq  R-2\sqrt{2} \alpha_1 {\rho}^{{m}} -  (2\sqrt{2}I(W)+x\alpha_1){\rho}^{\frac{m}{2}}-o(\frac{1}{\sqrt{n}}). \nonumber
\end{align}
Thus by changing the variable $n\leftarrow n+m$, for every $m,n \in \naturals$ such that $n \geq m$ we have
\begin{align}\label{bound2}
&\mathbb{P}(Z_{n} \leq 2^{ - 2^{E(n,x) -m}} ) \\\nonumber
&\geq  R-2\sqrt{2} \alpha_1 {\rho}^{{m}} - (2\sqrt{2}I(W)+x \alpha_1){\rho}^{\frac{m}{2}}-o(\frac{1}{\sqrt{n-m}}). \nonumber
\end{align}
\subsubsection{Upper bound}
Consider $m$ and $x$ as above. By \eqref{lower_bound_map} we have:
\begin{equation*}
\mathbb{P}(Z_{n+m} \geq (Z(W))^{2^{m}2^{\sum_{i=m+1}^{n+m}B_i}}) =1.
\end{equation*}
As a result,
\begin{equation*}
\mathbb{P}(Z_{n+m} \geq (Z(W))^{2^{m}2^{\sum_{i=m+1}^{n+m}B_i}} \mid Z_{m} \leq 2 {\rho}^m) =1.
\end{equation*}
Therefore,
\begin{small}
\begin{align} \label{bound5}
& \mathbb{P}(Z_{n+m} \geq (Z(W))^{2^{m}2^{E(n,x)}} \mid Z_{m} \leq 2 {\rho}^m) \\ \nonumber
& \geq \mathbb{P}((Z(W))^{2^{m}2^{\sum_{i=m+1}^{n+m}B_i}} \geq (Z(W))^{2^{m}2^{E(n,x)}} \mid Z_{m} \leq 2 {\rho}^m)\\ \nonumber
&=\mathbb{P}(\sum_{i=m+1}^{n+m}B_i \leq E(n,x) \mid Z_{m} \leq 2 {\rho}^m )\\ \nonumber
&=\mathbb{P}(\sum_{i=m+1}^{n+m}B_i \leq E(n,x)) \\ \nonumber
& \geq 1-x-o(\frac{1}{\sqrt{n}}),  \nonumber
\end{align}
\end{small}
and
\begin{small}
\begin{align*}
& \mathbb{P}(Z_{n+m} \geq (Z(W))^{2^{m}2^{E(n,x)}},Z_{m} \leq 2 {\rho}^m) \\
&= \mathbb{P}(Z_{n+m} \geq (Z(W))^{2^{m}2^{E(n,x)}} \mid Z_{m} \leq 2 {\rho}^m) \mathbb{P}(Z_{m} \leq 2 {\rho}^m) \\
&\geq (1-x-o(\frac{1}{\sqrt{n}}))(I(W)- \alpha_1 {\rho}^{\frac{m}{2}}).
\end{align*}
\end{small}
As a result, we have
\begin{small}
 \begin{align}\label{bound1}
& \mathbb{P}(Z_{n+m} \leq (Z(W))^{2^{m}2^{E(n,x)}},Z_{m} \leq 2 {\rho}^m) \\\nonumber
&=\mathbb{P}(Z_{m} \leq 2 {\rho}^m)-\mathbb{P}(Z_{n+m} \geq (Z(W))^{2^{m}{E(n,x)}},Z_{m} \leq 2 {\rho}^m) \\\nonumber
&\leq 1-(1-I(W)-\alpha_2 \rho^{m})-(1-x-o(\frac{1}{\sqrt{n}}))(I(W)-\alpha_1 {\rho}^{\frac{m}{2}})\\\nonumber
&\leq I(W)-(1-x)I(W)+ \alpha_2 \rho^{m}+ \alpha_1 {\rho}^{\frac{m}{2}}+o(\frac{1}{\sqrt{n}})\\\nonumber
&=xI(W)+ \alpha_2 \rho^{m}+ \alpha_1 {\rho}^{\frac{m}{2}}+o(\frac{1}{\sqrt{n}}).\nonumber
\end{align}
\end{small}
Also note that
 \begin{align}\label{bound}
& \mathbb{P}(Z_{n+m} \leq (Z(W))^{2^{m}2^{E(n,x)}})\\\nonumber
&=\mathbb{P}(Z_{n+m} \leq (Z(W))^{2^{m}2^{E(n,x)}},Z_{m} \leq 2 {\rho}^m)\\ \nonumber
&+\mathbb{P}(Z_{n+m} \leq (Z(W))^{2^{m}2^{E(n,x)}},Z_{m} \geq 2 {\rho}^m). \nonumber
\end{align}
We now upper bound the quantity $\mathbb{P}(Z_{n+m} \leq (Z(W))^{2^{m}2^{E(n,x)}},Z_{m} \geq 2 {\rho}^m)$. Firstly note that as $m$ grows large we have $(Z(W))^{2^{m}2^{E(n,x)}} \leq 2 {\rho}^m$. More precisely if we choose $m$ large enough so that  the inequality
\begin{equation}\label{m}
2^m \geq m \frac{\log \rho}{\log Z(W)},
\end{equation}
is fulfilled, then the relation $(Z(W))^{2^{m}2^{E(n,x)}} \leq 2 {\rho}^m$ holds. For this choice of $m$ we have
\begin{small}
\begin{align*}
&\mathbb{P}(Z_{n+m} \leq (Z(W))^{2^{m}2^{E(n,x)}},Z_{m} \geq 2 {\rho}^m)\\
&\leq \mathbb{P}(Z_{n+m} \leq 2 {\rho}^m, Z_m \geq 2 {\rho}^m )\\
& = \mathbb{P}(Z_{n+m} \leq 2 {\rho}^m  \! \! \!  ,2 {\rho}^m  \! \! \!  \leq Z_m \leq 1-2 {\rho}^m )\mathbb{P}(2 {\rho}^m \leq  \! \!  Z_m \leq  \! \!  1-2 {\rho}^m )\\
& \quad + \mathbb{P}(Z_{n+m} \leq 2 {\rho}^m \mid Z_m \geq 1-2 {\rho}^m )\mathbb{P}(Z_m \geq 1-2 {\rho}^m )\\
&\leq\mathbb{P}(2 {\rho}^m \leq Z_m \leq 1-2 {\rho}^m ) + \mathbb{P}(Z_{n+m} \leq 2 {\rho}^m \mid Z_m \geq 1-2 {\rho}^m ).
\end{align*}
\end{small}
Now, by Lemma~\ref{gen_bound} it is easy to see that
\begin{small}
\begin{align*}
\mathbb{P}(2 {\rho}^m \leq Z_m \leq 1-2 {\rho}^m )& \leq 1-(I(W)- \alpha_1 {\rho}^{\frac{m}{2}})\\
&-(1-I(W)-\alpha_2 \rho^{m})\\
&=  \alpha_1 {\rho}^{\frac{m}{2}}+\alpha_2 \rho^{m}.
\end{align*}
\end{small}
Also to upperbound $\mathbb{P}(Z_{n+m} \leq 2 {\rho}^m \mid Z_m \geq 1-2 {\rho}^m )$, note that if we consider the process $E_n$ given in \eqref{E} with the initial condition $e_0 = 1- 2 {\rho}^m$, then as a result of Lemma~\ref{lower_gen} we have
\begin{align*}
&\mathbb{P}(Z_{n+m} \leq 2 {\rho}^m \mid Z_m \geq 1-2 {\rho}^m )\\
&\leq \mathbb{P}(E_n \leq 2 {\rho}^m)\\
&\leq 2\sqrt{2}\sqrt{1-(1-2 {\rho}^m)^2}\\
&\leq 8{\rho}^{\frac{m}{2}}.
\end{align*}
Summing up the above arguments, we have
\begin{align}
 \mathbb{P}(Z_{n+m} \leq 2 {\rho}^m, Z_m \geq 2 {\rho}^m ) \leq   (\alpha_1+ 8) {\rho}^{\frac{m}{2}}+ \alpha_2 \rho^{m}.
\end{align}
And as a result, for $m$ large enough so that \eqref{m} is fulfilled we have
\begin{align*}
&\mathbb{P}(Z_{n+m} \leq (Z(W))^{2^{m}2^{E(n,x)}},Z_{m} \geq 2 {\rho}^m)\\
&\leq  (\alpha_1+ 8) {\rho}^{\frac{m}{2}}+ \alpha_2 \rho^{m}.
\end{align*}
Plugging this into \eqref{bound} and using \eqref{bound1}, we have
 \begin{align*}
&\mathbb{P}(Z_{n+m} \leq (Z(W))^{2^{m}2^{E(n,x)}}) \\ \nonumber
&\leq xI(W)+ 2\alpha_2 \rho^{m}+ (2 \alpha_1+ 8) {\rho}^{\frac{m}{2}}+o(\frac{1}{\sqrt{n}}). \nonumber
 \end{align*}
Also, since $E(n,x) \leq E(n+m,x)$ and $xI(W)=R$ we have:
  \begin{align*}
&\mathbb{P}(Z_{n+m} \leq (Z(W))^{2^{m}2^{E(n+m,x)}}) \\ \nonumber
&\leq xI(W)+ 2\alpha_2 \rho^{m}+ (2 \alpha_1+ 8) {\rho}^{\frac{m}{2}}+o(\frac{1}{\sqrt{n}}). \nonumber
 \end{align*}
Thus by changing the variable $n\leftarrow n+m$, for every $m,n \in \naturals$ such that $n \geq m$ we have
   \begin{align}\label{bound3}
&\mathbb{P}(Z_{n} \leq (Z(W))^{2^{m}2^{E(n,x)}}) \\ \nonumber
&\leq R+ 2\alpha_2 \rho^{m}+ (2 \alpha_1+ 8) {\rho}^{\frac{m}{2}}+o(\frac{1}{\sqrt{n-m}}) . \nonumber
 \end{align}
 \subsubsection{Combining the upper and lower bounds}
Recall that $f(n)$ is any function so that $f(n) = o(\sqrt{n})$ and $\lim_{n \to \infty} f(n) = \infty$. Thus by letting $m=f(n)$, as $n$ grows large, we have $m \ll n$ and by using \eqref{bound2} we have
 \begin{align*}
& \lim_{n \to \infty} \mathbb{P}(Z_{n} \leq 2^{ - 2^{E(n,x)(1+ \frac{-f(n)}{E(n,x)})}}) \geq R. \\\nonumber
\end{align*}
Therefore,
 \begin{align*}
& \lim_{n \to \infty} \mathbb{P}(Z_{n} \leq 2^{ - 2^{E(n,x)(1+ \Theta(\frac{f(n)}{E(n,x)})}}) \geq R. \\\nonumber
\end{align*}
Also, as $\lim_{n \to \infty}f(n) =\infty$, inequality \eqref{m} is fulfilled as $n$ grows large, and by \eqref{bound3}, we have
   \begin{align*}
\lim_{n \to \infty}\mathbb{P}(Z_{n} \leq 2^{-2^{E(n,x)+f(n)+\log(-\log(Z(W)))}}) \leq R.
 \end{align*}
And as a result,
   \begin{align*}
\lim_{n \to \infty}\mathbb{P}(Z_{n} \leq 2^{-2^{E(n,x)(1+\Theta(\frac{f(n)}{E(n,x)}))}}) \leq R.
 \end{align*}
Therefore, since the limit of the upper and lower bound equals $R$, we get the result.

To prove part (2), we first consider the process $Z'_n= 1-Z_n^2$. By using \eqref{general_process} we have
\begin{small}
\begin{equation*} \left\{
\begin{array}{lr}
Z'_{n+1}=1-Z_{n+1} ^2 \leq 1-Z_{n} ^4 \leq 2(1-Z_{n} ^2 )=2 Z'_n    &  ; \text{if } \bar{B}_n=1,\\
Z'_{n+1}=1-Z_{n+1} ^2 \leq (1-Z_{n} ^2)^2 = {Z'_n}^2  &  ; \text{if } \bar{B}_n=0.
\end{array} \right.
\end{equation*}
\end{small}
Thus the process $Z'_n$ with is stochastically dominated by the process $Z_n^u$ given by \eqref{upper_process} with $Z_0^u=Z'_0$. Also by using Lemma~\ref{gen_bound}, for $m \in \naturals$ we have
\begin{align*}
& \mathbb{P}(Z'_m \leq 2 \rho^m) \\
& = \mathbb{P}(1-Z'_m\geq 1-2 \rho^m) \\
&  =\mathbb{P}(Z_m^2 \geq 1-2 \rho^m) \\
&  =\mathbb{P}(Z_m \geq \sqrt{1-2 \rho^m)} \\
& \geq \mathbb{P}(Z_m \geq 1- \rho^m) \\
&\geq1- I(W) - 2\alpha_2 \rho^{m}.
\end{align*}
Similarly we obtain 
\begin{align*}
\mathbb{P}(Z'_m \geq 1-2 \rho^m) \geq I(W)- \alpha_1 {\rho}^{\frac{m}{2}}.
\end{align*}
Using the above statements for the process $Z'_n$ and going along the same lines as the proof of part (1), for $R < 1-I(W)$  we obtain
\begin{align*}
\lim_{n \to \infty}\mathbb{P} (Z'_n \leq 2^{-2^{E(n,\frac{R}{1-I(W)}) (1+\Theta(\frac{f(n)}{n}))} })=R,
\end{align*}
and by noting that $Z'_n = 1-Z_n^2$ we get the result.
\subsection{Proof of Theorem~\ref{main_result2}}
Let $\mathcal{I}$ be the set of chosen indices by the code $C(n,R)$,
let $U_1 ^{2^n}$ the block to be transmitted (including the
frozen bits), and let $Y_1 ^{2^n }$ be the received vector. Denote
by $\mathbb{P}_{e,i} ^{\text{MAP}}(N,R)$  the bit-error probability
when we decode the $i$-th bit by the MAP rule.  We have
\begin{align*}
\mathbb{P}_e ^{\text{MAP}} (N,R) & \stackrel{(a)}{\geq} \text{max}_{i \in \mathcal{I} } \{ \mathbb{P}_{e,i} ^{\text{MAP}} (N,R) \}\\
&\stackrel{(b)}{\geq}  \text{max}_{i \in \mathcal{I} } \{ H(U_i \mid Y_{1} ^{2^n } )\}  \\
& \geq \text{max}_{i \in \mathcal{I} } \{ H(U_i \mid Y_{1} ^{2^n }, U_1 ^{i-1}, U_{i+1}^{2^n})\}\\
&=\text{max}_{i \in \mathcal{I} } \{ H(\bar{W}_i) \},
\end{align*}
where $\bar{W}_i$ is the channel seen by $U_i$ when we have the
output $Y_{1} ^{2^n}$ and all the other bits $U_1,\cdots,U_{i-1},
U_{i+1}, \cdots, U_{2^n}$ available. To see step (a) consider the MAP decoder
for bit $i$. It has associated probability $\mathbb{P}_{e,i} ^{\text{MAP}} (N,R)$ and
is optimal. Compare this to the suboptimal bit decoder which first decodes the whole block
and then extracts the $i$-th bit. The probability of error associated to this decoder is at
most $\mathbb{P}_{e} ^{\text{MAP}} (N,R)$ since any time the block is decoded correctly
also the $i$-th bit is decoded correctly. Therefore, for any $i$,
$\mathbb{P}_{e,i} ^{\text{MAP}} (N,R) \leq \mathbb{P}_{e} ^{\text{MAP}}(N,R)$. Step (a) follows
by maximizing over $i$.
Step (b) is Fano's inequality.  Denote
the number of $1$s in the binary expansion of $i-1$ by $\text{wt}(i)$.
Then $\bar{W}_i$ is
\begin{equation}
\bar{W}_i= ((((W\overbrace{^+)^+)^{\cdots})^+)}^{\text{wt$(i)$ times}}.
 \end{equation}
As a result, $Z(\bar{W}_i)=(Z(W))^{2^{\text{wt}(i)}}$. Thus by using the inequality $I(\bar{W_i})^2+Z(\bar{W}_i)^2 \leq 1$ (\cite{Ari09}), we have $H(\bar{W}_i) \geq \frac 12 (Z(W))^{1+2^{\text{wt}.(i)}}  $. As a result,
\begin{align*}
\mathbb{P}_e ^{\text{MAP}} (N,R) & \geq \text{max}_{i \in \mathcal{I} } \{ H(\bar{W}_i) \}\\
& \geq \text{max}_{i \in \mathcal{I} } \{ \frac 12 (Z(\bar{W}_i))^{2^{1+\text{wt}(i)}} \}.
\end{align*}
Since, $\vert \mathcal{I} \vert = 2^nR$, the set $\mathcal{I}$ must
contain an index $i$ so that $\text{wt}(i) \leq E(n,R)$. Therefore,
\begin{small}
\begin{equation*}
\mathbb{P}_C ^{\text{MAP}} (n,R) \geq \frac12
(Z(W))^{2^{1+E(n,R)}}=2^{-2^{E(n,R)+ 1+\log(-\log(Z(W)))}-1}.
\end{equation*}
\end{small}
For the specific case of polar codes, we argue as follows: Let $n \in \naturals$, $m=\log n$. Also let $0<\epsilon<1$ be a constant. Using \eqref{bound5} we obtain
\begin{align*}
& \lim_{n \to \infty}\mathbb{P}(Z_{n+m} \geq (Z(W))^{2^{m}2^{E(n,x-\epsilon)}} \mid Z_{m} \leq 2 {\rho}^m) \\ \nonumber
& \geq 1-x+\epsilon.  \nonumber
\end{align*}
Also, using Lemma~\ref{main_lemma} we get
\begin{align*}
\lim_{n\to \infty} \mathbb{P}(Z_{n+m}\leq 2^{-2^{\sum_{i=m+1}^{n+m}B_i}}\mid Z_m \leq 2 {\rho}^m)=1.
\end{align*}   
As a result of the above two inequalities we have
\begin{small}
\begin{align*}
&\lim_{n\to \infty}  \! \! \!  \mathbb{P}((Z(W))^{2^{m}2^{E(n,x-\epsilon)}}\leq Z_{n+m} \! \! \! \leq 2^{-2^{\sum_{i=m+1}^{n+m}B_i}}\mid Z_m \leq 2 {\rho}^m)\\
&\geq 1-x+\epsilon.
\end{align*}
\end{small}
Also, using the result of Theorem~\ref{main_result} part (a), it is easy to see that
 \begin{align*} \nonumber
 \lim _{n \to \infty} \mathbb{P}(Z_{n+m} \leq 2^{ - 2^{E(n+m,x)+\Theta(m)}}\mid Z_m \leq 2 {\rho}^m) =x.
\end{align*}
As a result, given that $Z_m \leq 2 {\rho}^m$, as $n \to \infty$, the following two events have non-empty intersection
\begin{small}
\begin{align*}
&A_n=\{Z_{n+m} \leq 2^{ - 2^{E(n+m,x)+\Theta(m)}}\mid Z_m \leq 2 {\rho}^m\}\\
&B_n=\{(Z(W))^{2^{m}2^{E(n,x-\epsilon)}}\leq Z_{n+m}\leq 2^{-2^{\sum_{i=m+1}^{n+m}B_i}}\mid Z_m \leq 2 {\rho}^m\}.
\end{align*}
\end{small}
But the set $A_n$ exactly represents the set of indices of the sub-channels needed in order to achieve rate $R$. Also, for every $(b_1, \cdots,b_{m+n}) \in B_n $ we have
\begin{align*}
(Z(W))^{2^{m}2^{E(n,x-\epsilon)}} \leq 2^{-2^{\sum_{i=m+1}^{n+m}b_i}}.
\end{align*}
Or by applying the function $\log(-\log())$ to both sides we obtain
\begin{align*}
\sum_{i=m+1}^{n+m}b_i   \leq m + E(n+m,x-\epsilon) + \log(-\log(Z(W))).
\end{align*}
As a result,
\begin{align*}
\sum_{i=1}^{n+m}b_i \leq E(m+n,x-\epsilon)+ \Theta(m).
\end{align*}
Now since the intersection of $A_n$ and $B_n$ is non-empty for large $n$, there exists a $(b_1,\cdots,b_{n+m}) \in A_n$ with
 \begin{align*}
\sum_{i=1}^{n+m}b_i \leq E(m+n,x-\epsilon)+ \Theta(m).
\end{align*}
And by letting $\epsilon \to 0$ and noting that $\sum_{i=1}^{n+m}b_i$ is a weight of some sub-channel, we get the result.
\section{Appendix}
\subsection{Proof of Lemma~\ref{gen_bound}}
In order to prove Lemma~\ref{gen_bound},  we first need to state the following two lemmas and afterwards we give a proof of Lemma~\ref{gen_bound}.
\begin{lemma}\label{Q}
Let $Z_n$ be a process defined by $Z_0=z_0 \in [0,1]$ and
 \begin{equation*}
 Z_{n+1}  \left\{
\begin{array}{lr}
={Z_n}^2 &  ; \text{if $B_n=1$},\\
\in [Z_n \sqrt{2-{Z_n}^2} , 2Z_n-{Z_n}^2] &  ; \text{if $B_n=0$}.
\end{array} \right.
\end{equation*}
Let $Q_n=Z_n(1-Z_n)$. Then
\begin{equation*}
 \mathbb{E}[{Q_n}^{\frac 12} ] \leq \frac 12 (\frac{1.85}{2})^n.
\end{equation*}
\end{lemma}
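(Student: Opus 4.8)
The plan is to prove a one‑step multiplicative contraction for $\mathbb{E}[\sqrt{Q_n}]$ and then iterate. Let $\mathcal{F}_n$ be the natural filtration to which $Z_n$ is adapted (one fresh $\mathrm{Bernoulli}(\tfrac12)$ coin drives each transition). First reduce to the nondegenerate case: if $z_0\in\{0,1\}$ then $Z_n\equiv z_0$ and $Q_n\equiv 0$, so the claim is trivial; hence assume $z_0\in(0,1)$, and note that squaring and both endpoints $z\sqrt{2-z^2}$, $2z-z^2$ of the admissible interval keep the value strictly inside $(0,1)$, so $Z_n\in(0,1)$ and $Q_n\in(0,\tfrac14]$ for all $n$, which legitimizes dividing by $\sqrt{Q_n}$ below.

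The core claim is that for every $n$,
\begin{equation*}
\mathbb{E}\bigl[\sqrt{Q_{n+1}}\ \big|\ \mathcal{F}_n\bigr]\ \le\ \tfrac{1.85}{2}\,\sqrt{Q_n}\qquad\text{a.s.}
\end{equation*}
Granting this, the tower property yields $\mathbb{E}[\sqrt{Q_n}]\le(\tfrac{1.85}{2})^n\sqrt{Q_0}=(\tfrac{1.85}{2})^n\sqrt{z_0(1-z_0)}\le\tfrac12(\tfrac{1.85}{2})^n$, where the last step uses $z_0(1-z_0)\le\tfrac14$; this is exactly the assertion of the lemma.

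To prove the core claim, condition on $\mathcal{F}_n$ and set $z:=Z_n\in(0,1)$. With probability $\tfrac12$ the transition is $Z_{n+1}=z^2$, and then $\sqrt{Q_{n+1}}/\sqrt{Q_n}=\sqrt{z^2(1-z^2)/(z(1-z))}=\sqrt{z(1+z)}$ exactly. With probability $\tfrac12$ we have $Z_{n+1}\in[a(z),b(z)]$ with $a(z)=z\sqrt{2-z^2}$ and $b(z)=2z-z^2$; since $t\mapsto t(1-t)$ is concave with maximum $\tfrac14$ at $t=\tfrac12$, we get $Q_{n+1}\le M(z):=\max_{t\in[a(z),b(z)]}t(1-t)$, which equals $b(z)(1-b(z))$ when $b(z)\le\tfrac12$ (i.e. $z\le1-\tfrac1{\sqrt2}$), equals $\tfrac14$ when $a(z)\le\tfrac12\le b(z)$, and equals $a(z)(1-a(z))$ when $a(z)\ge\tfrac12$ (i.e. $z\ge\sqrt{1-\tfrac{\sqrt3}{2}}$); since $a(z)\le b(z)$ and $1-\tfrac1{\sqrt2}<\sqrt{1-\tfrac{\sqrt3}{2}}$, these three ranges cover $(0,1)$. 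Dividing the conditional expectation by $\sqrt{Q_n}=\sqrt{z(1-z)}$, the core claim reduces to the scalar inequality
\begin{equation*}
h(z):=\sqrt{z(1+z)}+\sqrt{\frac{M(z)}{z(1-z)}}\ \le\ 1.85,\qquad z\in(0,1).
\end{equation*}

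Verifying this scalar inequality is the main obstacle: it is somewhat delicate, since its true supremum is about $1.848$ (attained near $z\approx0.68$), so crude factor‑by‑factor bounds such as $M(z)\le b(z)(1-a(z))$ or $M(z)\le\tfrac14$ are not good enough on their own. I would treat the three regimes separately. On the first, $M(z)/(z(1-z))=(2-z)(1-z)$, so $h(z)=\sqrt{z+z^2}+\sqrt{2-3z+z^2}$; here $h'>0$ throughout (the $\sqrt{z+z^2}$ term's derivative dominates), so the supremum is $h(1-\tfrac1{\sqrt2})\approx1.71$. On the middle regime $h(z)=\sqrt{z+z^2}+\tfrac1{2\sqrt{z(1-z)}}$, and monotonicity of the two pieces gives a bound $\approx1.75$. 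On the third regime, using the identity $1-z\sqrt{2-z^2}=(1-z^2)^2/(1+z\sqrt{2-z^2})$ one simplifies $M(z)/(z(1-z))=\sqrt{2-z^2}\,(1-z)(1+z)^2/(1+z\sqrt{2-z^2})$, so $h$ is an explicit smooth function on $[\sqrt{1-\tfrac{\sqrt3}{2}},1]$; examining $h'$ shows it has a single interior maximum, which one then bounds by $1.85$ either by a short calculus estimate or by sandwiching $h$ between elementary functions on a small subinterval around $z\approx0.68$. Combining the three regimes gives $h\le1.85$ on $(0,1)$, hence the one‑step contraction and the lemma.
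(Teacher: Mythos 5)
Your proposal is correct and follows essentially the same route as the paper: establish the one-step contraction $\mathbb{E}[\sqrt{Q_{n+1}}\mid \mathcal{F}_n]\le \tfrac{1.85}{2}\sqrt{Q_n}$ by bounding the squaring branch by $\sqrt{z(1+z)}$ and the minus branch by the worst-case ratio $\sqrt{x(1-x)/(z(1-z))}$ over $x\in[z\sqrt{2-z^2},\,2z-z^2]$, then iterate and use $\sqrt{Q_0}\le\tfrac12$. In fact you give more detail than the paper on the only delicate point, the scalar maximization $h(z)\le 1.85$ (whose supremum is indeed $\approx 1.848$ near $z\approx 0.68$), which the paper simply asserts.
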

\begin{proof}
We have
 \begin{equation*}
 Q_{n+1}= Q_n . \left\{
\begin{array}{lr}
=Z_n(1+Z_n) &  ; \text{if $B_n=1$},\\
\in [\frac{Z_n \sqrt{2-{Z_n}^2}}{Z_n(1-Z_n)} , \frac{2Z_n-{Z_n}^2}{Z_n(1-Z_n)}] &  ; \text{if $B_n=0$}.
\end{array} \right.
\end{equation*}
As a result
\begin{small}
\begin{align*}
 & \mathbb{E}[{Q_{n+1} }^{\frac 12} \mid Q_n] \\
 & \leq \frac {Q_n^{\frac 12}}{2} [\max_{Z_n\sqrt{2-Z_n ^2}\leq x \leq Z_n(2-Z_n)} \{ \sqrt{\frac{x(1-x)}{Z_n(1-Z_n)}}\}+ \sqrt{Z_n(1+Z_n)}]\\
 & \leq \frac {Q_n ^ {\frac 12}}{2} [\max_{z\sqrt{2-z ^2}\leq x \leq z(2-z),0 \leq z \leq 1} \{\sqrt{\frac{x(1-x)}{z(1-z)}}+ \sqrt{z(1+z)}\}]\\
& \leq Q_n ^{\frac 12} \frac{1.85}{2}.
\end{align*}
\end{small}
Therefore,
\begin{equation*}
\mathbb{E}[Q_n ^{\frac 12}] \leq (\frac{1.85}{2})^n \mathbb{E}[Q_0 ^ {\frac 12}] \leq \frac 12(\frac{1.85}{2})^n  .
\end{equation*}
\end{proof}
 \begin{lemma}\label{lower_gen}
 Let $E_n$ be the process defined by $E_0 = e_0 $ and
\begin{equation} \label{E}
E_{n+1}=  \left\{
\begin{array}{lr}
{E_n}^ 2   &  ; \text{if } B_n=1, \\
E_n\sqrt{2-{E_n}^2} &  ; \text{if } B_n=0.
\end{array} \right.
\end{equation}
For $n \in \naturals$ we have:
\begin{equation*}
 \mathbb{P} (E_n \geq 1-  2^{-2^{\sum_{i=1} ^{n} \bar{B}_i}} ) \geq 1-2\sqrt{2} \sqrt{1- e_0 ^2}.
\end{equation*}
\end{lemma}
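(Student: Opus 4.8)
The plan is to mirror the proof of Lemma~\ref{main_lemma}: transform $E_n$ into a process that decays to $0$ and is stochastically dominated by the process $Z_n^u$ of \eqref{upper_process}, only now driven by the complementary bit sequence $\{\bar B_i\}$. Throughout I assume $e_0\in[0,1]$, which is the regime in which the lemma is applied; one checks inductively from \eqref{E} that then $E_n\in[0,1]$ for all $n$, using that $t\mapsto t(2-t)\le 1$ on $[0,1]$.

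First I would set $D_n\defas 1-E_n^2$, so $D_0=1-e_0^2\in[0,1]$. A direct computation from \eqref{E} gives, in the branch $B_n=1$, $D_{n+1}=1-E_n^4=(1-E_n^2)(1+E_n^2)\le 2D_n$, and in the branch $B_n=0$, $D_{n+1}=1-E_n^2(2-E_n^2)=(1-E_n^2)^2=D_n^2$. Thus $D_n$ squares when $B_n=0$ and at most doubles when $B_n=1$, which is exactly the law of $Z_n^u$ in \eqref{upper_process} with the roles of the two bit-values interchanged.

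Next I would couple $D_n$ with the process $Z_n^u$ of \eqref{upper_process} initialized at $z_0^u=D_0=1-e_0^2$ but driven by $\{\bar B_i\}$ (i.e.\ squaring when $\bar B_n=1$, doubling when $\bar B_n=0$). Since $\{\bar B_i\}$ is again i.i.d.\ Bernoulli$(\tfrac12)$, Lemma~\ref{main_lemma} applies verbatim to this process, with $\sum_{i=1}^n\bar B_i$ in the exponent. A one-line induction shows $D_n\le Z_n^u$ pointwise: in a doubling step $D_{n+1}\le 2D_n\le 2Z_n^u=Z_{n+1}^u$, and in a squaring step $D_{n+1}=D_n^2\le(Z_n^u)^2=Z_{n+1}^u$ because squaring is nondecreasing on $[0,\infty)$ and both quantities are nonnegative.

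Finally, applying Lemma~\ref{main_lemma} with $\beta=1$ to this $Z_n^u$ gives $\mathbb{P}(Z_n^u\le 2^{-2^{\sum_{i=1}^n\bar B_i}})\ge 1-2^{3/2}\sqrt{1-e_0^2}=1-2\sqrt 2\sqrt{1-e_0^2}$, hence the same bound for $\mathbb{P}(D_n\le 2^{-2^{\sum_{i=1}^n\bar B_i}})$ via $D_n\le Z_n^u$. On that event $E_n^2\ge 1-2^{-2^{\sum_{i=1}^n\bar B_i}}$, and since $\sqrt{1-x}\ge 1-x$ for $x\in[0,1]$ we get $E_n\ge 1-2^{-2^{\sum_{i=1}^n\bar B_i}}$, which is the claim. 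There is no real obstacle here beyond bookkeeping: the one thing to get right is the bit-flip — that $\{\bar B_i\}$ has the same distribution as $\{B_i\}$, so Lemma~\ref{main_lemma} transfers with $\sum\bar B_i$ in place of $\sum B_i$, and that the inequality $D_{n+1}\le 2D_n$ in the $B_n=1$ branch is still compatible with the exact doubling in the definition of $Z_n^u$, which is what makes the pointwise domination go through.
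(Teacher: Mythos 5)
Your proof is correct and follows essentially the same route as the paper: the paper also sets $\bar E_n = 1-E_n^2$, observes it squares/at-most-doubles so that it is dominated by $Z_n^u$ started at $1-e_0^2$ and driven by the complementary bits, applies \eqref{eq:behavior} with $\beta=1$, and converts back via $E_n \ge \sqrt{1-x}\ge 1-x$. Your write-up is just slightly more explicit about the pointwise coupling induction (and labels the bit branches consistently, where the paper's display appears to swap $\bar B_n=0$ and $\bar B_n=1$), but the substance is identical.
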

\begin{proof}
We have:
\begin{equation*} \left\{
\begin{array}{lr}
1-E_{n+1} ^2 =1-E_{n} ^4 \leq 2(1-E_{n} ^2 )    &  ; \text{if } \bar{B}_n=1,\\
1-E_{n+1} ^2 = (1-E_{n} ^2)^2  &  ; \text{if } \bar{B}_n=0.
\end{array} \right.
\end{equation*}
Hence the process $\bar{E}_n=1- E_n ^2$ with the initial condition $\bar{E}_0=1- e_0 ^2$ is stochastically dominated by the process $Z_n ^{u}$ given by \eqref{upper_process} and $z_0 ^{u} = 1- e_0 ^2$. Therefore, by \eqref{eq:behavior} we have:
\begin{equation*}
 \mathbb{P} (\bar{E}_n \leq  2^{- 2^{\sum_{i=1} ^{n} \bar{B}_i}}) \geq 1-2\sqrt{2} \sqrt{1- e_0 ^2}.
\end{equation*}
Also,
\begin{align*}
&  \mathbb{P} (\bar{E}_n \leq  2^{-2^{\sum_{i=1} ^{n} \bar{B}_i}})\\
& =  \mathbb{P} (E_n ^2 \geq 1- 2^{-2^{\sum_{i=1} ^{n} \bar{B}_i}})\\
& =  \mathbb{P} (E_n  \geq (1- 2^{-2^{\sum_{i=1} ^{n} \bar{B}_i}})^{\frac 12})\\
& \leq  \mathbb{P} (E_n  \geq 1- 2^{- 2^{\sum_{i=1} ^{n} \bar{B}_i}}).
\end{align*}
As a result, we have
\begin{equation*}
 \mathbb{P} (E_n \geq 1-  2^{-2^{\sum_{i=1} ^{n} \bar{B}_i}} ) \geq 1-2\sqrt{2} \sqrt{1- e_0 ^2}.
\end{equation*}
\end{proof}
Using the above two lemmas, we now prove Lemma~\ref{gen_bound}. Let $\rho_1= (\frac{1.85}{2 \times \rho})^2$. Consider the process $Q_n =Z_n (1- Z_n)$. According to Lemma~\ref{Q} and by using the Markov inequality
\begin{align*}
\mathbb{P}(Q_n \geq { \rho_1 }^n) = \mathbb{P}({Q_n}^{ \frac 12} \geq  (\rho_1)^{\frac{n}{2}}) \leq ({\frac{1.85}{2 \sqrt{\rho_1}}})^n={\rho}^n.
\end{align*}
As a result,
\begin{align*}
 &  \mathbb{P}( \frac{1- \sqrt{1-4 {\rho_1}^n}}{2} \leq Z_n \leq \frac{1+\sqrt{1-4 {\rho_1}^n}}{2}) \\
& = \mathbb{P}(Q_n \leq {\rho_1}^n)  \leq {\rho}^n.
\end{align*}
Consider a partitioning of the interval $[0,1]$ into the three intervals
\begin{small}
\begin{align*}
[0,1]=[0, \frac{1- \sqrt{1-4 {\rho_1}^n}}{2}] & \cup [\frac{1- \sqrt{1-4 {\rho_1}^n}}{2},\frac{1+ \sqrt{1-4 {\rho_1}^n}}{2}] \\
& \cup [\frac{1+ \sqrt{1-4 {\rho_1}^n}}{2},1],
\end{align*}
\end{small}
and define $A,B$ and $C$ as
\begin{align*}
& A= \mathbb{P}(Z_n \leq \frac{1- \sqrt{1-4 {\rho_1}^n}}{2}),\\
& B=\mathbb{P}(\frac{1- \sqrt{1-4 {\rho_1}^n}}{2} \leq Z_n \leq \frac{1+ \sqrt{1-4 {\rho_1}^n}}{2}),\\
&C= \mathbb{P}(Z_n \geq \frac{1+ \sqrt{1-4 {\rho_1}^n}}{2}).
\end{align*}
Also let $A'$, $B'$ and $C'$ be the fraction of $A,B$ and $C$ respectively that will eventually (as $n \to \infty$) go to zero. Clearly we must have
\begin{equation}
A'+B'+C'= \mathbb{P}(Z_{\infty}=0)= I(W).
\end{equation}
Clearly $B' \leq B \leq \rho ^n $. To upper-bound $C'$ note that if we consider the process $E_n$  given by \eqref{E} and $E_0=e_0= \frac{1+ \sqrt{1-4 {\rho_1}^n}}{2} $ then by \eqref{general_process} it is easy to see that $\mathbb{P}(E_{\infty}=0)$ is an upper bound on $C'$.  Thus we have
\begin{align*}
C' & \leq \mathbb{P}(E_{\infty}=0)\\
& \leq 4 \sqrt{1- e_0 ^2}\\
& = 2\sqrt{2} \sqrt{\rho_1 ^n + \frac{1- \sqrt{1-4 {\rho_1}^n}}{2}} \\
& \leq 2\sqrt{2} \sqrt{\rho_1 ^n + \frac{1- (1-4 {\rho_1}^n)}{2}} \\
& \leq 2\sqrt{6{\rho_1}^n}.
\end{align*}
Therefore,
\begin{align*}
 \mathbb{P}(Z_n  \leq \frac{1- \sqrt{1-4 {\rho_1}^n}}{2})& = A\\
 &\geq A'\\
 &=I(W) -B'-C'\\
 &\geq I(W)-\rho ^n - 2\sqrt{6} {\rho_1}^{\frac{n}{2}}.
\end{align*}
As a result, since $\rho \geq \rho_1$ we have $\frac{1- \sqrt{1-4 {\rho_1}^n}}{2} \leq 2\rho^n$, and we get
\begin{equation*}
\mathbb{P}(Z_n \leq 2 {\rho}^n) \geq I(W) -  (1+ 2 \sqrt{6}) {\rho_1}^{\frac{n}{2}}.
\end{equation*}
Thus part (a) now follows by letting $\alpha_1= 1+ 2 \sqrt{6}$. For the proof of part (b), let $A''$, $B''$ and $C''$ be the fraction of $A,B$ and $C$ respectively that will eventually (as $n \to \infty$) go to one. Clearly we must have
\begin{equation}
A''+B''+C''= \mathbb{P}(Z_{\infty}=1)= 1- I(W).
\end{equation}
 Clearly $B'' \leq B \leq \rho ^n $. To upper-bound $A''$ note that if we consider the process $\bar{Z}_n$ given by  $\bar{Z}_0 =\frac{1- \sqrt{1-4 {\rho_1}^n}}{2}$ and
 \begin{equation*}
\bar{Z}_{n+1}=  \left\{
\begin{array}{lr}
\bar{Z}_n ^{2}  &  ; \text{if $B_i=1$},\\
2 \bar{Z}_n -\bar{Z}_n ^{2} &  ; \text{if $B_i=0$},
\end{array} \right.
\end{equation*}
  then $\mathbb{P}(\bar{Z}_{\infty} = 1)$ is an upper bound on $A''$. Therefore we have $A'' \leq \frac{1- \sqrt{1-4 {\rho_1}^n}}{2}$. As a result, $C$ can be bounded from below by
\begin{align*}
 \mathbb{P}(Z_n  \geq \frac{1+ \sqrt{1-4 {\rho_1}^n}}{2})& = C\\
 &\geq C''\\
 &=1-I(W) -A''-B''\\
& \geq 1-I(W)- {\rho} ^n-\frac{1- \sqrt{1-2 {\rho_1}^n}}{2}\\
& \geq 1-I(W)- {\rho} ^n-4 {\rho_1}^n.
\end{align*}
And since $\rho \geq \rho_1$, we get the result in a similar way as part (a) by taking $\alpha_2 =5$.
\bibliographystyle{IEEEtran}
\bibliography{lth,lthpub}
\end{document}